\newcommand{\pbs}[1]{\let\temp=\\#1\let\\=\temp}
\numberwithin{equation}{section}
\def\be{\begin{equation}}\def\ee{\end{equation}}
\def\cvp{\raise 2pt\hbox{,}}
\def\re{\mathop{\text{Re}}\nolimits}  \def\res{\mathop{\text{res}}\nolimits}
\def\arctanh{\mathop{\text{arctanh}}\nolimits}
 \def\d{{\rm d}} 
\def\la{\lambda}
\def\la{\lambda}
\def\disk{\mathscr D}
\newtheorem{lemma}{Lemma}[section]
\newtheorem{proposition}{Proposition}[section]
\newtheorem{theorem}{Theorem}[section]
\def\plb#1#2#3{{\it Phys.\ Lett.\ }{\bf B #1} (#2) #3}
\def\npb#1#2#3{{\it Nucl.\ Phys.\ }{\bf B #1} (#2) #3}
\def\jhep#1#2#3{{\it J. High Energy Phys.\ }{\bf #1} (#2) #3}
\def\prd#1#2#3{{\it Phys.\ Rev.\ }{\bf D #1} (#2) #3}
\def\cmp#1#2#3{{\it Comm.\ Math.\ Phys.\ }{\bf #1} (#2) #3}
\def\ap#1#2#3{{\it Ann.\ of Phys.\ }{\bf #1} (#2) #3}
\def\imath#1#2#3{{\it Invent math }{\bf #1} (#2) #3}
\def\am#1#2#3{{\it Advances in Math.\ }{\bf #1} (#2) #3}
\begin{document}
%
%\pagenumbering{roman}
%
{\pagestyle{empty}
\parskip 0in
\

\vfill
\begin{center}
%{\sffamily\large\bfseries SPECTRAL ASYMMETRY AND SUPERSYMMETRY}
{\LARGE Dirichlet Scalar Determinants On}

\bigskip

{\LARGE Two-Dimensional Constant Curvature Disks}

\vspace{0.4in}

%Frank F{\scshape errari}%{\renewcommand{\thefootnote}{$\!\!\dagger$}
%\footnote{On leave of absence from Centre National de la Recherche
%Scientifique, Laboratoire de Physique Th\'eorique de l'\'Ecole Normale
%Sup\'erieure, Paris, France.}}

Soumyadeep C{\scshape haudhuri} and Frank F{\scshape errari}

\medskip
{\it Service de Physique Th\'eorique et Math\'ematique\\
Universit\'e Libre de Bruxelles (ULB) and International Solvay Institutes\\
Campus de la Plaine, CP 231, B-1050 Bruxelles, Belgique}

%\medskip
%
%{\it Okinawa Institute Of Science and Technology\\
%Okinawa, Onna 904-0495, Japan
%}

\smallskip
{\tt chaudhurisoumyadeep@gmail.com, frank.ferrari@ulb.be}
\end{center}
\vfill\noindent

We compute the scalar determinants $\det(\Delta+M^{2})$ on the two-dimensional round disks of constant curvature $R=0$, $\mp 2$, for any finite boundary length $\ell$ and mass $M$, with Dirichlet boundary conditions, using the $\zeta$-function prescription. When $M^{2}=\pm q(q+1)$, $q\in\mathbb N$, a simple expression involving only elementary functions and the Euler $\Gamma$ function is found. Applications to two-dimensional Liouville and Jackiw-Teitelboim quantum gravity are presented in a separate paper.

\vfill

%\noindent Gravit\'e, Th\'eories de Jauge et Cordes,}
%Les Houches summer school 2001, Session LXXVI.
\medskip
%
%\vfill
%\begin{flushleft}
%\today
%\end{flushleft}
%
\newpage\pagestyle{plain}
\baselineskip 16pt
\setcounter{footnote}{0}
%La ligne suivante permet la numerotation des equations par section
%\renewcommand{\theequation}{\thesection.\arabic{equation}}
% La ligne suivante fait pareil mais dans le format amsmath
%\numberwithin{equation}{section}
%\tableofcontents\pagenumbering{arabic}

}%closes the parskip = 0 and the pagestyle=empty section

\tableofcontents

\section{Introduction}

Motivated by applications in holography at finite cut-off and two-dimensional quantum gravity models on finite geometries \cite{Loopcalc,JTferra}, we compute the partition function of a scalar field of mass $M$ on a round disk of finite boundary length $\ell$, with Dirichlet boundary conditions, in zero, negative and positive constant curvatures, that is to say, for a Ricci scalar $R=2\eta/L^{2}$, $\eta=0$, $-1$ or $+1$. This amounts to evaluating the determinants
\be\label{Ddef}\mathscr D^{\eta}(M^{2},\ell,L) = {\det}_{\text D}\bigl(\Delta_{\eta} + M^{2}\bigr)\ee
where $\Delta_{\eta}$ is the positive Laplacian for the appropriate disk metrics.  Explicitly, parameterizing the disk in polar coordinates $(r,\theta)$, $0\leq r\leq 1$, we have
\be\label{Deltadef} \Delta_{\eta}f =-e^{-2\sigma_{\eta}}\biggl(\frac{1}{r}\partial_{r}\bigl(r\partial_{r}f\bigr) + \frac{1}{r^{2}}\partial_{\theta}^{2}f\biggr)\ee
for a conformal factor
\be\label{conffactor} e^{\sigma_{0}} =\frac{\ell}{2\pi}\,\cvp\quad  e^{\sigma_{\pm}} = \frac{2r_{0}L}{1\pm r_{0}^{2}r^{2}}\,\cdotp\ee
The dimensionless parameter $r_{0}$ is related to the boundary length, boundary extrinsic curvature and bulk area according to 
\be\label{ellminrel} \ell^{\eta}(r_{0}) = \frac{4\pi r_{0}L}{1+\eta r_{0}^{2}}\,\cvp\quad 
k^{\eta}(r_{0}) = \frac{1-\eta r_{0}^{2}}{2r_{0}L}\,\cvp\quad
A^{\eta}(r_{0}) = \frac{4\pi  r_{0}^{2}L^{2}}{1+\eta r_{0}^{2}} = L\ell^{\eta} r_{0}\, .\ee
In negative curvature, $0\leq r_{0}<1$. In positive curvature, $r_{0}$ is an arbitrary positive real number. In this case, the two disks associated with $r_{0}$ and $1/r_{0}$ have the same boundary length but will of course yield two different determinants. The notation $\mathscr D^{+}(M^{2},\ell,L)$ is thus ambiguous and, when necessary, we use the more precise notation $\mathscr D^{+>}$ and $\mathscr D^{+<}$ to distinguish between $r_{0}\geq 1$ and $r_{0}\leq 1$. Eventually, it will be more convenient to use the parameter $r_{0}$ instead of $\ell$, see Eq.\ \eqref{deletadef}.

Let us briefly note that, instead of \eqref{Ddef}, one may prefer to study the dimensionless determinant ${\det}_{\text D}[\mu^{-2}(\Delta_{\eta} + M^{2})]$, where $\mu$ is an arbitrary renormalization scale. The equations \eqref{Ddef}, \eqref{Deltadef} and \eqref{ellminrel} show that $\mu$ can be trivially absorbed by rescaling $\ell$, $L$ and $M$ appropriately. We thus set $\mu=1$ in the following and work with dimensionless parameters.

Functional determinants like \eqref{Ddef} needs to be renormalized and are thus always defined modulo the addition of local counterterms. The possible counterterms here are of the form
\be\label{counterterm} \int\!\d^{2}x\sqrt{g}\, ,\quad \int\!\d^{2}x\sqrt{g}\, R\, ,\quad \oint\!\d s\, ,\quad \oint\!\d s\, k\, ,\ee
where the integrals are taken over the disk bulk or the disk boundary. In constant curvature, the two bulk terms are equivalent and yield the area (cosmological constant) counterterm. Due to the Gauss-Bonnet formula, the integral of the extrinsic curvature $k$ over the boundary also yields the area and a $\ell$-independent constant. Overall, $\ln \mathscr D_{\eta}$ is thus defined modulo the addition of terms of the form
\be\label{Dambiguity} c_{0}(M,L) + c_{1}(M,L)\ell + c_{2}(M,L)A\, .\ee
When one uses a specific regularization and renormalization scheme, these terms take a specific form, but they are not physical. 

We shall use the $\zeta$-function scheme \cite{zetaref,bfref} in the following. If $(\la_{p}^{\eta})_{p\geq 0}$ denotes the set of eigenvalues of the Laplacian, $\la^{\eta}_{0}\leq\la^{\eta}_{1}\leq\cdots$, the $\zeta$-function is defined by 
\be\label{zetadef}\zeta^{\eta}(s;M^{2},\ell,L) = \sum_{p\geq 0}\frac{1}{\bigl(\la_{p}^{\eta}+M^{2}\bigr)^{s}}\,\cdotp\ee
The series converges for $\re s>1$ and defines a meromorphic function on the complex $s$-plane, that is regular at $s=0$. The determinant is then defined as
\be\label{detzetadef}\mathscr D^{\eta}(M^{2},\ell,L) = e^{-(\zeta^{\eta})'(0;M^{2},\ell,L)}\, ,\ee
where the prime denotes the derivative with respect to $s$. This procedure may seem abstract, but one can show explicitly that it is equivalent to subtracting local counterterms from the effective action $\ln \det\bigl(\Delta + M^{2}\bigr)$, see e.g.\ \cite{bfref} for a pedagogical discussion.

Let us note that the flat case can be obtained as the limit $L\rightarrow\infty$ of either the positive (with the choice $r_{0}<1$) or the negative curvature cases,
\be\label{flatfromcurved} \mathscr D^{0}(M^{2},\ell) = \lim_{L\rightarrow +\infty}\mathscr D^{-}(M^{2},\ell,L)=\lim_{L\rightarrow +\infty}\mathscr D^{+<}(M^{2},\ell,L)\, .\ee
Nevertheless, it will be useful to treat the flat case separately.

The definition \eqref{detzetadef} allows to immediately eliminate one parameter from our problem. In zero curvature, we use
\be\label{dimzero}\zeta^{0}\bigl(s;M^{2},\ell\bigr) = \Bigl(\frac{\ell}{2\pi}\Bigr)^{2s}\zeta^{0}\biggl(s;\Bigl(\frac{\ell M}{2\pi}\Bigr)^{2},2\pi\biggr)\ee
%{\color{blue}
%\begin{equation}
%\begin{split}
%\zeta^{0}\bigl(s;M,\ell\bigr)
%&= \sum_{r\geq 0}\frac{1}{\bigl(\la_{r}^{0}+M^{2}\bigr)^{s}} = \sum_{r\geq 0}\frac{1}{\bigl((\frac{2\pi}{\ell})^2\tilde\la_{r}^{0}+M^{2}\bigr)^{s}}  =\Big(\frac{\ell}{2\pi}\Big)^{2s} \sum_{r\geq 0}\frac{1}{\bigl(\tilde\la_{r}^{0}+(\frac{\ell M}{2\pi})^{2}\bigr)^{s}}\\
%&= \Bigl(\frac{\ell}{2\pi}\Bigr)^{2s}\zeta^{0}\Bigl(s;\frac{\ell M}{2\pi},2\pi\Bigr)
%\end{split}
%\end{equation}
%}
%
to get
\begin{multline}\label{dimzerodet} \mathscr D^{0}\bigl(M^{2},\ell\bigr) = \Bigl(\frac{\ell}{2\pi}\Bigr)^{-2\zeta^{0}(0;(\ell M/(2\pi))^{2},2\pi)} \mathscr D^{0}\biggl(\Bigl(\frac{\ell M}{2\pi}\Bigr)^{2},2\pi\biggr)\\=
\Bigl(\frac{\ell}{2\pi}\Bigr)^{-\frac{1}{3}+\frac{AM^{2}}{2\pi}}
\mathscr D^{0}\biggl(\Bigl(\frac{\ell M}{2\pi}\Bigr)^{2},2\pi\biggr)\, .
\end{multline}
We have used the general formula
\be\label{zetaatzero} \zeta(0)=\frac{\chi}{6} -\frac{A M^{2}}{4\pi}=\frac{1}{6}-\frac{A M^{2}}{4\pi}\ee
for the value of the $\zeta$ function at $s=0$ in terms of the Euler characteristics $\chi$, which is one for the disk, and the area $A$ of the surface on which the Laplacian is defined. In the flat case, we thus work from now on with $\ell=2\pi$ and note
\be\label{delzerodef} D^{0}(M^{2})=\mathscr D^{0}(M^{2},2\pi)\, .\ee
In non-zero curvature, we use
\be\label{dimcurv}\zeta^{\eta}\bigl(s;M^{2},\ell,L\bigr) = L^{2s}\zeta^{\eta}\bigl(s;(LM)^{2},\ell/L,1\bigr)\ee
to get
\be\label{dimcurvdet} \mathscr D^{\eta}\bigl(M^{2},\ell,L\bigr) = L^{-2\zeta^{\eta}(0; (L M)^{2},\ell/L,1)} \mathscr D^{\eta}\bigl((L M)^{2},\ell/L,1\bigr)=L^{-\frac{1}{3} + \frac{A^{\eta}M^{2}}{2\pi}} \mathscr D^{\eta}\bigl((LM)^{2},\ell/L,1\bigr)\, .\ee
From now on, we thus work with $L=1$ and note
\be\label{deletadef} D^{\eta}(M^{2},r_{0})=\mathscr D^{\eta}(M^{2},\ell,1)\, .\ee
We shall use the parameter $r_{0}$ instead of $\ell$ to waive the ambiguity associated with the fact that $\ell(r_{0}) = \ell(1/r_{0})$ in the positive curvature case.

Our main goal in this work will be to derive the following.
\begin{theorem}
The functions $D^{0}(M^{2})$ and $D^{\eta}(M^{2},r_{0})$ are given by
\begin{align}\label{detzero} 
\begin{split}
&\ln D^{0}(M^{2}) = \frac{1}{3}\ln 2 - \frac{1}{2}\ln(2\pi)-\frac{5}{12}-2\zeta_{\text R}'(-1) +\frac{1}{2}\bigl(\gamma -1-\ln 2 \bigr) M^{2}\\&\hskip 8cm
+\sum_{n\in\mathbb Z}\ln\biggl[\frac{2^{|n|}|n|!}{M^{|n|}}I_{|n|}(M)
e^{-\frac{M^{2}}{4(|n|+1)}}\biggr]\, ,
\end{split}
\\
 \label{deteta}
\begin{split}
&\ln D^{\eta}(M^{2},r_{0}) = - \frac{1}{2}\ln(2\pi)-\frac{5}{12}-2\zeta_{\text R}'(-1) + \frac{\eta A^{\eta}(r_{0})}{3\pi}-\frac{1}{3}\ln r_{0}\\
& +\frac{1}{2\pi}\Bigl(\gamma -1 +\ln r_{0} \Bigr) A^{\eta}(r_{0}) M^{2}
+\sum_{n\in\mathbb Z}\ln\biggl[f_{n}^{\eta}(1;-M^{2})
e^{-\frac{A^{\eta}M^{2}}{4\pi(|n|+1)}F(1,1,|n|+2,\frac{\eta A^{\eta}}{4\pi})}\biggr]\, ,
\end{split}
\end{align}
where $\zeta_{\text R}$ is the Riemann $\zeta$-function, $\gamma$ is Euler's constant, $I_{n}$ is the modified Bessel function of the first kind, the functions $f_{n}^{\eta}$ are hypergeometric functions defined in Eq.\ \eqref{eigenminus} and \eqref{eigenplus}, and $A^{\eta}$ is the area, defined in \eqref{ellminrel}.
\end{theorem}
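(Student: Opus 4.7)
The natural plan is to reduce the determinant to a product of one-dimensional determinants, one for each angular Fourier mode, then apply the contour-integral / Gelfand-Yaglom technique to each radial problem, and finally regularize the sum over modes.

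\textbf{Step 1: Separation of variables.} Decompose $L^{2}$ of the disk using the angular Fourier basis $e^{in\theta}/\sqrt{2\pi}$, $n\in\mathbb Z$. For each $n$, the eigenvalue equation $(\Delta_{\eta}+M^{2})f=\la f$ becomes a radial Sturm-Liouville problem on $[0,1]$ with regularity at $r=0$ and Dirichlet condition at $r=1$. For $\eta=0$ the regular solutions are $r\mapsto J_{|n|}(\sqrt{\la-M^{2}}\,r)$; for $\eta=\pm 1$ they are the hypergeometric functions $f_{n}^{\eta}(r;-M^{2})$ referenced in \eqref{eigenminus}--\eqref{eigenplus}. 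Thus the Dirichlet spectrum in the $n$-th sector is exactly the set of zeros of $z\mapsto f_{n}^{\eta}(1;-z)$ (resp.\ the corresponding Bessel object in the flat case), which is an entire function of order $1/2$.

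\textbf{Step 2: Contour representation of $\zeta^{\eta}_{n}$.} For each $n$, define the partial $\zeta$-function $\zeta^{\eta}_{n}(s) = \sum_{p}(\la_{n,p}^{\eta}+M^{2})^{-s}$. Using the argument principle on a Hankel-type contour enclosing the positive real axis, one rewrites
\[
\zeta^{\eta}_{n}(s)=\frac{\sin\pi s}{\pi}\int_{0}^{\infty}\!\la^{-s}\frac{d}{d\la}\ln\bigl[f_{n}^{\eta}(1;-M^{2}-\la)\bigr]\,d\la
\]
(with the obvious Bessel analogue when $\eta=0$), valid in a strip of $s$. To analytically continue to $s=0$ and to differentiate, one subtracts the leading large-$\la$ asymptotics of $\ln f_{n}^{\eta}(1;-M^{2}-\la)$ inside the integrand, adding back the explicitly-known $\la$-integrals of those asymptotics. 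Evaluating at $s=0$ yields a closed expression for $(\zeta^{\eta}_{n})'(0)$ as $-\ln f_{n}^{\eta}(1;-M^{2})$ plus an elementary correction involving $|n|$, $M^{2}$ and, in the curved case, the area $A^{\eta}$.

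\textbf{Step 3: Regularized sum over $n$.} The total $\zeta$-function is $\zeta^{\eta}(s)=\sum_{n\in\mathbb Z}\zeta^{\eta}_{n}(s)$, which must be given meaning by analytic continuation since $\sum_{n}(\zeta^{\eta}_{n})'(0)$ diverges as $|n|\to\infty$. The key input is the uniform large-$|n|$ asymptotic expansion of $f_{n}^{\eta}(1;-M^{2})$: using Debye-type expansions for Bessel and hypergeometric functions, one obtains a controlled expansion of $\ln f_{n}^{\eta}(1;-M^{2})$ in inverse powers of $|n|$. Subtracting the first few terms of this expansion from the summand yields an absolutely convergent series, while the subtracted pieces are summed in closed form using the Hurwitz/Riemann $\zeta$-function at $s=0,-1$. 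It is here that $\zeta_{R}'(-1)$ and the constants $-\tfrac{5}{12}$, $-\tfrac{1}{2}\ln(2\pi)$, etc., arise, and the $M^{2}$-proportional divergences combine into the universal area counterterm with coefficient $\tfrac{1}{2\pi}(\gamma-1+\ln r_{0})$ dictated by the heat-kernel formula \eqref{zetaatzero}. The factor $e^{-A^{\eta}M^{2}/[4\pi(|n|+1)]F(1,1,|n|+2,\eta A^{\eta}/(4\pi))}$ inside the product in \eqref{deteta} is precisely the per-mode counterterm that makes the infinite product convergent.

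\textbf{Step 4: Assembly and checks.} Combining the per-mode determinants from Step 2 with the regularized-sum constants from Step 3 produces \eqref{detzero} and \eqref{deteta}. As consistency checks, one verifies the flat-space limit \eqref{flatfromcurved} by taking $L\to\infty$ (i.e.\ $r_{0}\to 0$ at fixed $\ell$) in \eqref{deteta} and matching with \eqref{detzero}, and one confirms that $\zeta^{\eta}(0)$ from the construction reproduces $\tfrac{1}{6}-A^{\eta}M^{2}/(4\pi)$. The principal obstacle is Step 3: establishing the uniform-in-$n$ asymptotic expansion of the hypergeometric eigenfunctions $f_{n}^{\eta}(1;-M^{2})$ with enough precision and uniformity to both justify the interchange of sum and analytic continuation and to identify the finite constants cleanly. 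The flat case is technically easier because the Bessel asymptotics are standard, which is why it is natural to treat \eqref{detzero} first and then adapt the argument to the curved cases \eqref{deteta}.
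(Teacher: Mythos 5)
Your outline follows a genuinely different route from the paper. The paper does not attempt a direct mode-by-mode zeta regularization: it writes $D^{\eta}(M^{2})=D^{\eta}(0)\cdot[D^{\eta}(M^{2})/D^{\eta}(0)]$, obtains $D^{\eta}(0)$ from the conformal anomaly and the known unit-disk result \eqref{zeromassfinal} (which is where $-\tfrac{5}{12}$, $-\tfrac12\ln(2\pi)$ and $-2\zeta_{\text R}'(-1)$ come from), and then shows that the ratio is a Weierstrass canonical product $e^{C^{\eta}M^{2}}\prod_{p}(1+M^{2}/\la_{p})e^{-M^{2}/\la_{p}}$, Eq.~\eqref{ratioformula}, with $C^{\eta}$ the finite part of $\zeta^{\eta}(s;0)$ at $s=1$. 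The per-mode factors $e^{-M^{2}\zeta^{\eta}_{n}(1;0)}$ in the theorem are the exact traces $\sum_{k}1/\la^{\eta}_{n,k}$, read off by expanding the Sturm--Liouville product formula \eqref{ratioSLzero}--\eqref{ratioSLcurved} to first order in $M^{2}$; and $C^{\eta}$ is computed in flat space from the large-$n$ behaviour \eqref{zetanlargen} of $\zeta^{0}_{n}(s;0)$ near $s=1$, and in curved space from the Weyl-transformation law of the regularized Green's function at coinciding points, using the method-of-images propagator \eqref{Gonunitd}. Your direct Barnes/Kirsten-style approach (contour representation per mode, subtract uniform large-$n$ asymptotics, resum with Hurwitz zeta) is a legitimate alternative and would, if carried out, reproduce the same constants; its cost is that it requires uniform-in-$n$ asymptotics of the hypergeometric eigenfunctions to higher precision than the paper ever needs.

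There is, however, one concrete gap. You assert that the coefficient of $M^{2}$, namely $\tfrac{1}{2\pi}(\gamma-1+\ln r_{0})A^{\eta}$, is ``dictated by the heat-kernel formula \eqref{zetaatzero}.'' It is not. The heat-kernel datum $\zeta(0)=\tfrac16-AM^{2}/(4\pi)$ only fixes the scaling weight of the determinant (the power of the renormalization scale), i.e.\ the \emph{ambiguous}, scheme-dependent part of the linear-in-$M^{2}$ term. The actual finite coefficient is $C^{\eta}=\lim_{s\to1}\bigl(\zeta^{\eta}(s;0)-\tfrac{A^{\eta}}{4\pi}\tfrac{1}{s-1}\bigr)$, the integrated renormalized Green's function at coinciding points. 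This is a non-local spectral quantity: no local heat-kernel coefficient can produce the Euler constant $\gamma$ or the combination $\gamma-1+\ln r_{0}$, and it is sensitive to the Dirichlet boundary condition through the image-charge structure of the propagator. In your scheme this constant would have to emerge from the $s\to1$ behaviour of the resummed subtraction terms (as in Eqs.~\eqref{zetanlargen}--\eqref{Cvalue} of the paper for the flat case), which is precisely the step you have not supplied; in the curved case even the analogue of \eqref{zetanlargen} is not written down anywhere in your outline. Until $C^{\eta}$ is actually computed --- either by the large-$n$ analysis near $s=1$ or by the Green's-function/Weyl-rescaling argument of Section~\ref{curveSec} --- the linear-in-$M^{2}$ terms of \eqref{detzero} and \eqref{deteta} are unproven, and these are exactly the terms that cannot be fixed by counterterm ambiguities alone once the zeta scheme is declared.
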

The above convergent series representation can be used to evaluate the determinants numerically with very good accuracy. For instance, in flat space, one can use the integral representation of the modified Bessel function,
\be\label{Besselintrepnum} I_{n}(z) = \frac{(z/2)^{n}}{\sqrt{\pi}\Gamma(n+1/2)}\int_{0}^{\pi}\!\d t\, e^{z\cos t}(\sin t)^{2n}\, ,\ee
to evaluate numerically $n!I_{n}$.\footnote{This is much more efficient than evaluating directly $n!I_{n}$ in Mathematica, probably because it eliminates the huge factorials by dividing explicitly by $\Gamma (n+1/2)$.}

We do not know how compute the infinite sums in terms of known functions for arbitrary values of $M^{2}$ but, interestingly, this can be done for special values of $M^{2}$, which turn out to be important in specific quantum gravity applications \cite{Loopcalc}. The simplifications occur when
\be\label{specialmasses} M^{2} = M^{2}_{\eta, q}=-\eta q(q+1)\, , \quad q\in\mathbb N\, . \ee
Note that in negative curvature, these special values of the mass are the values for which the conformal dimension of the boundary operator associated with the massive scalar field in hyperbolic space are integers,
\be\label{confdim} \Delta = \frac{1}{2}\Bigl(1+\sqrt{1+4 M^{2}}\Bigr) = q+1\, .\ee
For these special values, the functions $f_{n}^{\pm}$ are polynomials of degree $q$ of their argument $\eta A^{\eta}/(4\pi)$. For $q=0$, the infinite sums are actually zero, term by term. For $q=1$, which is relevant in quantum gravity, we find
\begin{multline}\label{detspecial}\ln D^{\eta}\bigl(M^{2}=-2\eta,r_{0}\bigr) =  - \frac{1}{2}\ln(2\pi)-\frac{5}{12}-2\zeta_{\text R}'(-1) +\frac{\eta}{\pi}\Bigl(\frac{4}{3}-\ln r_{0}\Bigr)A^{\eta} \\ - \frac{1}{3}\ln r_{0} + \ln \bigl(1-\eta r_{0}^{2}\bigr) - \frac{3-\eta r_{0}^{2}}{1+\eta r_{0}^{2}}\ln\bigl(1+\eta r_{0}^{2}\bigr)
- 2\ln\Gamma\Bigl(\frac{2}{1+\eta r_{0}^{2}}\Bigr)\, .
\end{multline}
The explicit formula for any integer $q$ is given in Section \ref{specialSec}, Eq.\ \eqref{specialdetform}.

When the mass parameter vanishes, the determinants are of course well-known \cite{Weis}, because the calculation is then reduced to a straightforward application of the conformal anomaly \cite{Polyakov}. There is also a vast literature on the computation of massive determinants on compact Riemann surfaces, see e.g.\ \cite{compactdet,Voros}, and for constant curvature surfaces with geodesic boundaries \cite{geoddet,KMW}, making the link with the Selberg $\zeta$-function and Barnes double gamma function. The determinants have also been studied in infinite volume, especially in negative curvature, motivated by applications in holography \cite{hyperdet} (see also \cite{hyperdet2,hyperdet3} and the references therein). This will be reviewed below. However, to the best of our knowledge, it is the first time that explicit formulas are derived for massive scalar field determinants, or determinants of the same kind, on finite constant curvature round disks. These calculations, and possible generalizations, are relevant for studies of finite-size quantum field theory and finite cut-off holography. Our main motivation came from specific applications to two-dimensional Liouville and Jackiw-Teitelboim quantum gravities \cite{Loopcalc}. In the very special case of the positive curvature disk with geodesic boundary, which corresponds to the hemisphere, we can make the link with the results in \cite{KMW}, finding a perfect match.

The plan of the paper is as follows. We begin by briefly reviewing in Section \ref{infSec} the case of infinite area in zero and negative curvatures, in order to put our finite size results in perspective. In particular, it is emphasized that the large area limit of the finite size determinants does not match the known strictly infinite size formulas in negative curvature, an important and subtle effect associated with hyperbolic space. In Section \ref{genfinareaSec}, we explain the general set-up in finite area. The rotational invariance of the problem allows to decompose the spectrum in terms of Fourier modes, with associated Sturm-Liouville operators and determinants. We also study the large mass expansion of the determinants by using the heat kernel. In flat space, the large mass expansion is equivalent to the large area expansion, but this is not so in curved space. In Section \ref{zeromassSec}, we compute the zero mass determinants by using the conformal anomaly. In Section \ref{SLSec}, we compute the Sturm-Liouville determinants for any value of the mass parameter. In Section \ref{ratioSec}, it is explained how to compute the ratio between the massive and the massless determinants. This is then used in Sections \ref{flatSec} and \ref{curveSec} to derive Eqs.\ \eqref{detzero} and \eqref{deteta}. In section \ref{curveSec}, we also use Eq. \eqref{deteta}  to derive an exact formula for the determinant (with an arbitrary mass) on a hemisphere. Section \ref{specialSec} is devoted to the study of the special case $M^{2}= \mp q(q+1)$. We are able to derive very explicit expressions for the determinants in these cases, in terms of elementary functions and the Euler $\Gamma$ function, Eq.\ \eqref{specialdetform}. Applying these formulas, we obtain the large area limit of the determinants in negative curvature.

\section{\label{infSec}The case of infinite area}

\subsection{Euclidean space}

On the flat infinite Euclidean plane, the $\zeta$ function per unit area is given by
\be\label{zetaflatinf} \zeta^{0}_{\infty} (s;M^{2}) =\frac{1}{2\pi} \int_{0}^{\infty}\frac{\nu\d\nu}{(\nu^{2}+M^{2})^{s}} = \frac{M^{2(1-s)}}{4\pi(s-1)}\, \cdotp\ee
The measure $\frac{1}{2\pi}\nu\d\nu$ is a Euclidean ``Plancherel'' measure straightforwardly obtained by diagonalizing the Laplacian with plane waves and going to polar coordinates in momentum space. We get the determinant
\be\label{detflatinf} \ln D^{0}_{\infty} = \frac{M^{2}}{4\pi}\bigl(1-\ln M^{2}\bigr)\, .\ee

In order to understand the negative curvature case below, it is interesting to rederive this result by using a basis of eigenfunctions diagonalizing the rotation generator around the origin. One thus expands the plane waves in Fourier components
\be\label{Fourierflat} e^{i\vec p\cdot\vec x} = e^{i \nu r\cos(\theta-\phi)} =\sum_{n\in\mathbb Z}i^{n}J_{|n|}(\nu r)e^{in(\theta-\phi)}\, ,\ee
where $\theta$ and $\phi$ are the polar angles of $\vec x$ and $\vec p$ and $J$ the Bessel functions. This yields an orthonormal basis of eigenfunctions 
\be\label{orthbasisflat} \mathscr J_{\nu,n}(r,\theta) = \sqrt{\frac{\nu}{2\pi}}\,e^{in\theta}J_{|n|}(\nu r)\, .\ee
The normalization is chosen in such a way that the usual normalization condition for the plane waves implies
\be\label{normflat} \int_{0}^{2\pi}\!\d\theta\,\int_{0}^{1}\!\d r\, r \mathscr J^{*}_{\nu_{1},n_{1}}(r,\theta)\mathscr J_{\nu_{2},n_{2}}(r,\theta) = \delta(\nu_{1}-\nu_{2})\delta_{n_{1},n_{2}}\, .\ee
The Fourier expansion of a function $f$ in the basis $(\mathscr J_{\nu,n})_{\nu\geq 0,\,n\in\mathbb Z}$ then reads
\be\label{FourierJn} f(r,\theta) = \sum_{n\in\mathbb Z}\int_{0}^{\infty}\!\d\nu\, c_{n}(\nu) \mathscr J_{\nu,n}(r,\theta)\ee
with coefficients
\be\label{Fouriercoef} c_{n}(\nu) = \int_{0}^{2\pi}\!\d\theta\,\int_{0}^{1}\!\d r\, r\mathscr J^{*}_{\nu,n}(r,\theta) f(r,\theta)\, .\ee
In this formalism, the position-space $\zeta$ function reads
\be\label{zetflatpos} \zeta^{0}_{\infty} (s;M^{2};\vec x_{1},\vec x_{2}) = \sum_{n\in\mathbb Z}\int_{0}^{\infty}\!\d\nu\, \frac{\mathscr J^{*}_{\nu,n}(r_{1},\theta_{1})\mathscr J_{\nu,n}(r_{2},\theta_{2})}{(\nu^{2}+M^{2})^{s}}\, \cdotp\ee
Because Euclidean space is homogeneous, the $\zeta$ function at coinciding points $\vec x_{1}=\vec x_{2}$ is actually space-independent. Integrating over space then yields an infinite area factor. Factorizing this factor yields the zeta function per unit area
\be\label{zetapla} \zeta^{0}_{\infty}(s;M^{2}) = \int_{0}^{\infty}\frac{\mathfrak m_{\text E}(\nu)\d\nu}{(\nu^{2}+M^{2})^{s}}\, \cvp\ee
with an integration measure, the ``Euclidean Plancherel measure,'' given by
\be\label{Planflat} \mathfrak m_{\text E}(\nu) = \sum_{n\in\mathbb Z}\bigl|\mathscr J_{\nu,n}(0)\bigr|^{2} = \frac{\nu}{2\pi}\,\cdotp\ee
Of course, we find again the formula \eqref{zetaflatinf}.

\subsection{Hyperbolic space}

There is a strong formal similarity between the spectral problems for the Laplacian on the non-compact hyperbolic space and on the Euclidean space reviewed above. As in the Euclidean case, the spectrum is continuous and a basis of eigenfunctions, analogous to the plane waves, exists. Working in units for which $L=1$ and thus the curvature is $R=-2$, the eigenvalues of the Laplacian are parameterized as $\frac{1}{4}+\nu^{2}$ with $\nu\geq 0$. The fact that the eigenvalues are always greater than $1/4$ is the famous Breitenlohner-Freedman bound. The usual plane waves are replaced by non-Euclidean plane waves which are given explicitly by an appropriate power of the Poisson kernel. They can be expanded in Fourier components as
\be\label{FnonEuclexp} \biggl(\frac{1-r^{2}}{1+r^{2}-2 r \cos (\theta-\phi)}\biggr)^{\frac{1}{2}+i\nu}=\sum_{n\in\mathbb Z}F_{\nu,|n|}(r)e^{in(\theta-\phi)}
\ee
where
\be\label{defFn} F_{\nu,n}(r) = \frac{\Gamma(1/2+i\nu+n)}{n!\Gamma (1/2+i\nu)} r^{n}F\Bigl(\frac{1}{2}+i\nu,\frac{1}{2}-i\nu,n+1,-\frac{r^{2}}{1-r^{2}}\Bigr)\ee
is given in terms of a usual hypergeometric function. This yields an orthonormal basis 
\be\label{orthnonEucl}\mathscr F_{\nu,n}(r,\theta) = \mathscr N_{n}(\nu)e^{i n\theta}F_{\nu,|n|}(r)\, ,\ee
with a normalization factor $\mathscr N_{n}(\nu)$
chosen in such a way that 
\be\label{normhyper}\int_{0}^{2\pi}\!\d\theta\,\int_{0}^{1}\!\d r\, \frac{4r}{(1-r^{2})^{2}}  \mathscr F^{*}_{\nu_{1},n_{1}}(r,\theta)\mathscr F_{\nu_{2},n_{2}}(r,\theta) = \delta(\nu_{1}-\nu_{2})\delta_{n_{1},n_{2}}\, .\ee
The non-Euclidean Fourier expansion reads
\be\label{FnonEucl} f(r,\theta) = \sum_{n\in\mathbb Z}\int_{0}^{\infty}\!\d\nu\, c_{n}(\nu)\mathscr F_{\nu,n}(r,\theta)\ee
with non-Euclidean Fourier coefficients
\be\label{nonEFcoeff} c_{n}(\nu) = \int_{0}^{2\pi}\!\d\theta\,\int_{0}^{1}\!\d r\, \frac{4r}{(1-r^{2})^{2}}\mathscr F^{*}_{\nu,n}(r,\theta)f(r,\theta)\, .\ee
The $\zeta$ function per unit area is then given by
\be\label{zetahyper} \zeta^{-}_{\infty}(s;M) = \int_{0}^{\infty}\frac{\d\nu\,\frak m(\nu)}{(\nu^{2}+1/4+M^{2})^{s}}\, \cvp\ee
with an integration measure, called the Plancherel measure, given by
\be\label{Planhyper} \frak m(\nu) = \sum_{n\in\mathbb Z}\bigl|\mathscr F_{\nu,n}(0)\bigr|^{2} = \bigl|\mathscr N_{0}(\nu)\bigr|^{2}\, .\ee
There remains the problem of determining the normalization factor, so that \eqref{normhyper} is satisfied. This looks a priori non-trivial, because the integral involving the product of hypergeometric functions does not seem to be standard. However, the $F_{\nu,n}$ are eigenfunctions for the Sturm-Liouville operators $L_{n}^{-}$ defined in Eq.\ \eqref{flatSL} below, in the case $r_{0}=1$. A standard trick in Sturm-Liouville theory allows to compute the normalization integrals of the eigenfunctions in terms of their asymptotics. These asymptotics are well-known in the case of the hypergeometric function \eqref{defFn}, and thus the calculation can be done straightforwardly. Since this is not so important for our purposes, we will refrain from giving more details here. The result is the famous Plancherel measure on the hyperbolic space
\be\label{Plancherel} \frak m(\nu) = \frac{\nu\tanh (\pi\nu)}{2\pi}\,\cdotp\ee
From this, one may obtain a rather explicit formula for the determinant. In order to perform the analytic continuation of the $\zeta$ function to $s=0$, it is useful to write
\be\label{plantrick} \tanh(\pi\nu) = 1 - \frac{2e^{-2\pi\nu}}{1+e^{-2\pi\nu}}\,\cdotp\ee
This allows to isolate the non-holomorphic piece, which is entirely given by the simple pole at $s=1$,
\be\label{zetinf4} \zeta^{-}_{\infty}(s;M^{2}) = \frac{(M^{2}+1/4)^{1-s}}{4\pi(s-1)} - \frac{1}{\pi}\int_{0}^{\infty}\!\d\nu\,\frac{\nu}{e^{2\pi\nu}+1}\frac{1}{(M^{2}+\nu^{2}+1/4)^{s}}\, \cdotp\ee
This yields the logarithm of the infinite volume determinant, per unit area,
\be\label{dethyperinf} \ln D^{-}_{\infty} = \frac{M^{2}+1/4}{4\pi}\Bigl(1-\ln(M^{2}+1/4)\Bigr)-\frac{1}{\pi}\int_{0}^{\infty}\!\d\nu\,\frac{\nu}{e^{2\pi\nu}+1}\ln\bigl(M^{2}+\nu^{2}+1/4\bigr)\, .\ee

\noindent\emph{Comments:}

i) The integral in \eqref{dethyperinf} might be given a more explicit expression, see e.g.\ \cite{caldarelli}. The general result is that in even dimensions, the formulas are complicated, but in odd dimensions simple closed-form formulas can be found, because the Plancherel measures are then much simpler.

ii) Note that the expression \eqref{dethyperinf} does not have any direct physical significance, since it corresponds to an area counterterm. The same can of course be said in the flat space case.

iii) The Plancherel measure has an obvious interpretation as a density of eigenvalues per unit area of the Laplacian. More precisely, any naive discrete finite area expression involving a sum over the eigenvalues is replaced by an integral with the Plancherel measure,
\be\label{dosPlanhyper} \frac{1}{A}\sum_{p\geq 0}f(\la_{p}) \longrightarrow \int_{0}^{\infty}\!\d\nu\,\frak m(\nu) f(\nu^{2}+1/4)\, .\ee

iv) However, the interpretation of the Plancherel measure as the infinite area density of eigenvalues must be taken with extreme caution due to the following subtlety associated with hyperbolic space \cite{Adachi}.\footnote{We would like to thank Steve Zelditch for emphasizing this crucial property.} 

If one considers the problem for the disk of finite area and finite circumference $\ell$, as we are going to do below, the spectrum of the Laplacian is discrete. At this level, the situation is very similar to the case of flat space. One can define a density of eigenvalues per unit area
\be\label{density} \frak r(\la) = \frac{1}{A}\sum_{p\geq 0}\delta(\la-\la_{p})\ee
such that
\be\label{densint} \frac{1}{A}\sum_{p\geq 0}f(\la_{p}) = \int\!\d\lambda\, \frak r(\la)f(\la) = 2\int_{0}^{\infty}\!\d\nu\, \nu\, \frak r(\nu^{2}+1/4) f(\nu^{2}+1/4)\, . \ee
Note that the finite area density $\frak r$ depends, of course, on the type of boundary conditions we use; we focus in the present paper on Dirichlet boundary conditions. 

In the case of flat space, it is easy to check that the large $\ell$ limit of the density is independent of the boundary conditions and yields the limit measure $\frac{1}{2\pi}\nu\d\nu$, Eq.\ \eqref{Planflat}. But things are much more complicated in hyperbolic space. The qualitative origin of the difficulty is the well-known geometric property of hyperbolic space that, at large $\ell$, the area scales as the circumference, $A\sim\ell$, and, moreover, the area of an annulus of upper circumference $\ell$ and fixed geodesic width $a$ divided by the total area goes to a constant $1-e^{-a}$ instead of zero. As a consequence, the large $\ell$ limit is keeping track of the boundary condition used in finite area and the limit of the density of eigenvalues \eqref{densint} does not coincide with the Plancherel measure \eqref{Plancherel} \cite{Adachi}. The actual form of the limit of the density $\frak r$ is not known, even for simple classes of boundary conditions like Dirichlet or Neumann! This is an outstanding open problem. Our exact finite area results below will allow us to illustrate this subtle effect in a non-trivial way, see the end of Section \ref{zeromassSec} and especially Section \ref{specialSec2}.

\section{\label{genfinareaSec}Generalities on the cases of finite area}

As explained above, we set $\ell=2\pi$ in flat space and $L=1$ in curved space.

\subsection{\label{SLoneSec}Spectra and Sturm-Liouville operators}

The spectral problem for the Laplacian is studied by decomposing the eigenfunctions in Fourier components. This reduces the problem to computing the spectra of purely radial Sturm-Liouville operators, 
\be\label{flatSL} L_{n}^{\eta} =  -\frac{1}{\omega_{\eta}(r)^{2}}\Biggl[\frac{1}{r}\frac{\d}{\d r}\Bigl(r\frac{\d}{\d r}\,\cdot\Bigr) - \frac{n^{2}}{r^{2}}\Biggr]\, ,\ee
with
\be\label{omegaeta}\omega_{0}(r)=1\, ,\quad \omega_{\pm}(r) = \frac{2r_{0}}{1\pm r_{0}^{2}r^{2}}\, \cdotp\ee
With Dirichlet boundary conditions at $r=1$, the operators $L_{n}^{\eta}$ are symmetric with respect to the scalar product
\be\label{scaprodSL} \langle f_{1}|f_{2}\rangle = \int_{0}^{1} \omega_{\eta}(r)^{2}r f_{1}(r)f_{2}(r)\, \d r\, .\ee
Note that $\smash{L_{n}^{\eta}=L_{-n}^{\eta}}$, it is thus enough to consider $n\geq 0$. The spectra of the operators $\smash{L_{n}^{\eta}}$ are discrete, and we denote their eigenvalues by $\la^{\eta}_{n,k}$, $n\geq 0$, ordered such that $\smash{\la^{\eta}_{n,k+1}>\la^{\eta}_{n,k}}$. The eigenvalues are strictly positive, because the $\smash{L_{n}^{\eta}}$ are positive operators and the constant mode is projected out by the Dirichlet boundary conditions. In negative curvature, we actually know that all the eigenvalues are strictly greater than $1/4$ by the Breitenlohner-Freedman bound, see below.

A complete basis of orthogonal eigenfunctions is straightforward to find. One first seeks the solution of $L_{n}^{\eta}\cdot f = \la f$, for $\la>0$, that is regular at $r=0$. One finds
\be\label{eigenflat} f_{n}^{0}(r;\la) =  J_{|n|}\bigl(\sqrt{\la}\, r\bigr)\, ,\ee
where $J_{|n|}$ is the Bessel function of the first kind, in the flat case, and
\begin{align}\label{eigenminus} f_{n}^{-}(r;\la) &=  r^{|n|} F\Bigl(\frac{1}{2} + \frac{i}{2}\sqrt{4\la - 1},\frac{1}{2}-\frac{i}{2}\sqrt{4\la-1},|n|+1,-\frac{r_{0}^{2}r^{2}}{1- r_{0}^{2}r^{2}}\Bigr) ,\\\label{eigenplus}
f_{n}^{+}(r;\la) &= r^{|n|} F\Bigl(\frac{1}{2} + \frac{1}{2}\sqrt{4\la+1},\frac{1}{2}-\frac{1}{2}\sqrt{4\la+1},|n|+1,\frac{r_{0}^{2}r^{2}}{1+ r_{0}^{2}r^{2}}\Bigr) ,
\end{align}
where $F=_{2}\!\!F_{1}$ is the standard hypergeometric function, in the negative and positive curvature cases.\footnote{These eigenfunctions are not normalized.} One then imposes the Dirichlet condition,
\be\label{Dirichleteigen} f_{n}^{\eta}(1;\la) = 0\, ,\ee
whose solutions yield the full spectrum.

One can be a little bit more precise on the structure of the spectrum, by using the fact that the eigenvalues $\la^{\eta}_{n,k}(r_{0})$ are strictly decreasing functions of $r_{0}$. This intuitive result is derived explicitly in App.\ \ref{AppA}. In positive curvature, the limit $r_{0}\rightarrow\infty$ must yield the spectrum $\{p(p+1), p\in\mathbb N\}$ of the Laplacian on the round sphere.\footnote{One may worry that the limit yields the round sphere minus a point, but this is immaterial at the level of $L^{2}$-spaces. In particular, it is straightforward to check that the ground state wave function of the Dirichlet problem, which is the ground state of the operator $L_{0}^{+}$, converges with respect to the norm associated with the scalar product \eqref{scaprodSL} to the s-wave constant mode ground state of the round sphere problem.} In particular, all the eigenvalues must be strictly greater than 2, except the lowest $\la_{0,0}^{+}$.

The $\zeta$ functions \ref{zetadef} we wish to study can be conveniently written as
\be\label{zetaandSL} \zeta^{\eta}(s;M^{2}) =\sum_{p\geq 0}\frac{1}{\bigl(\la_{p}^{\eta}+M^{2}\bigr)^{s}}= \zeta^{\eta}_{0}(s;M^{2}) + 2\sum_{n\geq 1}\zeta^{\eta}_{n}(s;M^{2})\, ,\ee
where 
\be\label{zetaSL} \zeta^{\eta}_{n}(s;M^{2}) = \sum_{k\geq 0}\frac{1}{\bigl(\la_{n,k}^{\eta}+M^{2}\bigr)^{s}}\ee
is the $\zeta$-function for the operator $L^{\eta}_{n}+M^{2}$, with associated Sturm-Liouville determinants
\be\label{SLdetdef} d_{n}^{\eta} = \det \bigl(L_{n}^{\eta}+M^{2}\bigr) = e^{-(\zeta^{\eta}_{n})'(0)}\, .\ee
Note that one and two-dimensional Weyl's law, see e.g.\ \cite{Weyllaw}, yield the asymptotics of the eigenvalues as
\be\label{Weyllaw} \la^{\eta}_{n,k}\underset{k\rightarrow\infty}{\sim} \Bigl(\frac{\pi k}{a^{\eta}}\Bigr)^{2}\, , \quad \la^{\eta}_{p}\underset{p\rightarrow\infty}{\sim}\frac{4\pi p}{A^{\eta}}\, \cvp\ee
where
\be\label{aetadef} a^{\eta} = \int_{0}^{1}\!\omega_{\eta}(r)\,\d r = 
\begin{cases} 1 & \text{if $\eta=0$},\\ 2\arctanh r_{0} = \ln\frac{1+r_{0}}{1-r_{0}} & \text{if $\eta=-1$},\\ 2\arctan r_{0} & \text{if $\eta=+1$}
\end{cases}\ee
and $A^{\eta}$ is the area, see Eq.\ \eqref{ellminrel}. This shows that the series representations for $\zeta_{n}^{\eta}(s)$ and $\zeta^{\eta}(s)$ converge for $\re s>1/2$ and $\re s>1$, respectively.

Our strategy will be to discuss first the Sturm-Liouville determinants, in Section \ref{SLSec}, before dealing with the full problem.

\noindent\emph{Remark:}

The $\zeta$ functions are unambiguously defined when $M^{2}$ is chosen in such a way that $\la_{n,k}^{\eta}+M^{2}>0$. If this is not the case, the result for the determinant is obtained by analytically continuing in $M^{2}$. This analytic continuation is easy to perform by using the following observation. If one considers the $\zeta$ function defined in \eqref{zetadef} and if one sets $\zeta^{\eta,p_{0}}(s) = \sum_{p\geq p_{0}}(\la_{p}^{\eta}+M^{2})^{-s}$, then the determinant is such that
\be\label{factdet} D_{\eta} = e^{-(\zeta^{\eta})'(0)} = \Biggl(\prod_{p=0}^{p_{0}-1}\bigl(\la_{p}^{\eta}+M^{2}\bigr)\Biggr) e^{-(\zeta^{\eta,p_{0}})'(0)}\, .\ee
This formula makes the analytic continuation obvious by choosing $p_{0}$ such that $\la_{p}^{\eta} + M^{2}>0$ for all $p\geq p_{0}$.

\subsection{\label{hkflatSec}The large $M$ and large $\ell$ expansions}

\subsubsection{Flat case}

In the flat case, Eq.\ \eqref{dimzero}, \eqref{dimzerodet} show that the large $\ell$ limit is  equivalent to the large $M$ limit. But it is well-known that the large $M$ limit is simple and given in terms of the heat kernel expansion.

Indeed, we can write\footnote{The dependence in $\ell$ is kept implicit in our notation for the $\zeta$ function.}
\be\label{zetaKexp} \zeta^{0} (s) = \frac{1}{\Gamma(s)}\int_{0}^{\infty}\!\d t\, t^{s-1}e^{-M^{2}t}K^{0}(t)\, ,\ee
where $K^{0}$ is the heat kernel for the flat Laplacian. At large $M$, the integral \eqref{zetaKexp} is dominated by the small $t$ region in which we can use the expansion
\be\label{Kexpan1} K^{0}(t) = \frac{1}{4\pi t}\sum_{k\in\mathbb N/2}a_{k}t^{k}\, .\ee
This immediately yields the large $M$ asymptotic expansion of the $\zeta$ function
\be\label{zetaexpan1} \zeta^{0}(s) \underset{M\rightarrow\infty}{=} \frac{1}{4\pi\Gamma(s)}\sum_{k\in\mathbb N/2} a_{k}\Gamma(s-1+k)M^{2-2k-2s}\ee
and thus of
\begin{multline}\label{detexpan1} \ln \mathscr D^{0}(M^{2},\ell) 
\underset{M\rightarrow\infty}{=} \frac{M^{2}a_{0}}{4\pi}\bigl(1-\ln M^{2}\bigr) +
\frac{Ma_{1/2}}{2\sqrt{\pi}} + 
\frac{a_{1}}{2\pi}\ln M \\- \frac{1}{4\pi}\sum_{\substack{k\in\mathbb N/2\\k\geq 3/2}}a_{k}\Gamma(k-1)M^{2-2k}\, .
\end{multline}
The heat kernel coefficients for Dirichlet boundary conditions are well-known, see e.g.\ \cite{Gilkey,heatkernel}. The first coefficients for the flat disk\footnote{The first three coefficients are universal: $a_{0}$ is Weyl's law, $a_{1/2}$ gives the correction to Weyl's law in the presence of a boundary and $a_{1}$ is related to the Euler characteristics.} are
\be\label{hkcoeff} a_{0}= A=\frac{\ell^{2}}{4\pi}\, \cvp\quad a_{1/2} = -\frac{\sqrt{\pi}}{2}\ell\, ,\quad a_{1}=\frac{2\pi}{3}\, \cvp\quad a_{3/2} = \frac{\pi^{5/2}}{16\ell}\, \cvp\ee
which yields
\be\label{detexpan2} \ln D^{0}(M^{2}) = \frac{1}{4}M^{2}\bigl(1-\ln M^{2}\bigr)-\frac{\pi M}{2}+\frac{1}{3}\ln M - \frac{\pi}{128 M} + O\bigl(M^{-2}\bigr)\, .
\ee
Using Eq.\ \eqref{dimzerodet}, we get
\be\label{detexpan3} \ln \mathscr D^{0}(M^{2},\ell) = \Bigl(\frac{\ell M}{4\pi}\Bigr)^{2}
\bigl(1-\ln M^{2}\bigr)-\frac{\ell M}{4}+\frac{1}{3}\ln M - \frac{\pi^{2}}{64\ell M} + O\bigl((\ell M)^{-2}\bigr)\, .\ee
Per unit area, and at leading order, this matches with the infinite area result \eqref{detflatinf}.

\subsubsection{Curved cases}

In positive or negative curvature, the only change compared to the discussion in the previous subsection concerns the evaluation of the heat kernel coefficient $a_{3/2}$. The general formula yields \cite{heatkernel}
\be\label{athhkcoeff} a_{3/2}=-\frac{\sqrt{\pi}}{64}\oint\!\d s\,\bigl(4R-k^{2}\bigr)
= \frac{\sqrt{\pi}}{64}\,\ell\biggl[-9\eta +\Bigl(\frac{2\pi}{\ell}\Bigr)^{2}\biggr]\, .\ee
Using \eqref{dimcurvdet}, we get
\begin{multline}\label{dethkhyper} \ln\mathscr D^{\eta}(M^{2},\ell,L) = \frac{A^{\eta}M^{2}}{4\pi}\bigl(1-\ln M^{2}\bigr)-\frac{\ell M}{4}+\frac{1}{3}\ln M\\ - \frac{\pi^{2}}{64 \ell M}\Bigl(1-\frac{9\eta\ell^{2}}{4\pi^{2}L^{2}}\Bigr) + O\bigl(1/(LM)^{2}\bigr)\, .
\end{multline}
Curvature effects have generated the new term $-9\eta\ell^{2}/(4\pi^{2}L^{2})$. In negative curvature, this term shows explicitly that the large $\ell$ expansion is distinct from the large $M$ expansion. As already mentioned at the end of Section \ref{infSec}, the large $\ell$ limit will be much more difficult to study; see the end of the next Section and Section \ref{specialSec2} for further discussion.

\section{\label{zeromassSec}Zero-mass determinants}

For $M=0$, the determinants can be straightforwardly computed using the conformal anomaly. 

The conformal anomaly relates the determinants of the Laplacian for metrics in the same conformal class. In general, if $g$ and $g_{0}$ are two metrics such that $g = e^{2\sigma} g_{0}$, then
\be\label{conformalano}
\ln \det\Delta_{g} = \ln \det\Delta_{g_{0}} - \frac{1}{12\pi}\tilde S_{\text L}(g_{0},g)\, .
\ee
The ``total'' action $\tilde S_{\text L}$ is the sum $\tilde S_{\text L} =  S_{\text L} + \Delta S_{\text L}$ of the usual Liouville action
\be\label{Liouville} S_{\text L}(g_{0},g) = \int\!d^{2}x\sqrt{g_{0}}\,\bigl( g_{0}^{\mu\nu}\partial_{\mu}\sigma\partial_{\nu}\sigma + R_{0}\sigma\bigr) + 2\oint\!\d s_{0}\, k_{0}\sigma\ee
and of a correction term
\be\label{Liouvcorr}\Delta S_{\text L}(g_{0},g) = 3\oint\!\d s_{0}\, n_{0}^{\mu}\partial_{\mu}\sigma = 3\oint\Bigl(\d s\, k - \d s_{0} k_{0}\Bigr)\, .\ee
We note $R$, $k$ and $R_{0}$, $k_{0}$ the Ricci scalar and extrinsic curvature of the boundary for the metrics $g$ and $g_{0}$ respectively. The correction term $\Delta S_{\text L}$ is a counterterm contribution and is thus irrelevant for physics. We include it because we want to provide exact formulas in the $\zeta$-function scheme and this counterterm is precisely generated in the $\zeta$-function computation of the determinant with Dirichlet boundary conditions.

In the flat case, applying the above general formulas for $g_{0}$ and $g$ being the flat disk metrics of circumferences $2\pi$ and $\ell$ respectively, we get 
\be\label{zeromass1}\ln \mathscr D^{0}(0,\ell) = \ln\mathscr D^{0}(0,2\pi)-\frac{1}{3}\ln\frac{\ell}{2\pi} = \ln D^{0}(0) -\frac{1}{3}\ln\frac{\ell}{2\pi}\, \cdotp\ee
The determinant $\mathscr D^{0}(0,2\pi) = D^{0}(0)$ was computed in \cite{Weis,Spreafico},
\be\label{zeromassfinal} \ln D^{0}(0)  = \frac{1}{3}\ln 2 - \frac{1}{2}\ln(2\pi) -\frac{5}{12} - 2\zeta_{\text R}'(-1)\, .\ee
This is a physically irrelevant and scheme-dependent constant, but it corresponds to the precise value in the $\zeta$-function scheme that we are using.

Note that the $\ell$-dependence in \eqref{zeromass1} also follows from Eq.\ \eqref{dimzerodet} at $M=0$. As a result, the conformal anomaly doesn't really provide any new information in the flat case. In this respect, the situation is more interesting in the curved cases. Evaluating the action $\tilde S_{\text L}$ for the conformal factors $\sigma_{\pm}$ given in \eqref{conffactor}, using in particular \eqref{ellminrel}, we get
\be\ln\mathscr D^{\pm}(0,\ell,L) = \ln D^{0}(0)\pm\frac{A^{\pm}}{3\pi L^{2}}-\frac{1}{3}\ln(2r_{0}L)\, ,\ee
where $r_0$ is related to $\ell$ by the relation given in \eqref{ellminrel}. The $L$-dependence is as predicted by \eqref{dimcurvdet} and we also obtain
\be\label{zeromasscurv} \ln D^{\pm}(0,r_{0}) = \ln D^{0}(0)\pm\frac{A^{\pm}}{3\pi}-\frac{1}{3}\ln(2r_{0})\, .\ee
%

%{\color{blue}
%\be \mathscr D^{\eta}\bigl(0,\ell,L\bigr) =L^{-\frac{1}{3}} \mathscr D^{\eta}\bigl(0,\ell/L,1\bigr)\, .\ee
%
%\be \ell^{\eta}(r_{0}) = \frac{4\pi r_{0}L}{1+\eta r_{0}^{2}}\,\cvp\quad
%A^{\eta}(r_{0}) = \frac{4\pi  r_{0}^{2}L^{2}}{1+\eta r_{0}^{2}} = L\ell r_{0}\, .\ee
%
%}

\noindent\emph{Remark:} 

This formula predicts a nice large $\ell$ expansion in negative curvature, of the form
\be\label{largelzerom}\ln D^{-}\bigl(0,r_{0}(\ell)\bigr) = -\frac{\ell}{3\pi}+\frac{1}{4} - \frac{1}{2}\ln(2\pi)-2\zeta_{\text R}'(-1) +\frac{2\pi^{3}}{9\ell^{3}} - \frac{8\pi^{5}}{15\ell^{5}}+O\bigl(1/\ell^{7}\bigr)\, .\ee
As expected, the leading term is proportional to the area, which is $A^{-}= \ell + O(1)$ at leading order. However, the coefficient $-\frac{1}{3\pi}$ does not match with the coefficient predicted by the Plancherel measure, Eq.\ \eqref{dethyperinf}, which is approximately $0.0538$. This is a direct illustration of the fact, explained at the end of Section \ref{infSec}, that the large $\ell$ limit of the determinants do not match with the $\ell=\infty$ Plancherel description.

\section{\label{SLSec}Sturm-Liouville determinants}

Sturm-Liouville determinants have been considered by many authors in the literature \cite{SLdet} and the results we need and that we present below are along the same lines.

\begin{proposition}\label{SLdetProp} We have, for $n\geq 0$,
\begin{align}\label{ratioSLzero}&\frac{d^{0}_{n}(M^{2})}{d^{0}_{n}(0)} = \frac{\det(L_{n}^{0}+M^{2})}{\det L_{n}^{0}}=\prod_{k\geq 0}\biggl[1+\frac{M^{2}}{\la^{0}_{n,k}}\biggr]= \frac{2^{n}n!}{M^{n}}I_{n}(M)\, , \\
\label{ratioSLcurved}&\frac{d^{\pm}_{n}(M^{2})}{d^{\pm}_{n}(0)} = \frac{\det(L_{n}^{\pm}+M^{2})}{\det L_{n}^{\pm}}=\prod_{k\geq 0}\biggl[1+\frac{M^{2}}{\la^{\pm}_{n,k}}\biggr] =f_{n}^{\pm}(1;-M^{2})\, ,
\end{align}
where the functions $f_{n}^{\eta}$ are defined in \eqref{eigenflat} and \eqref{eigenminus}.
\end{proposition}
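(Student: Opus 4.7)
The plan is to prove the two equalities in each line of the proposition separately: first, that the $\zeta$-regularized ratio $d_n^\eta(M^2)/d_n^\eta(0)$ equals the formal infinite product $\prod_k[1+M^2/\lambda_{n,k}^\eta]$; then, that this product reproduces the stated Bessel/hypergeometric expression.

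For the first equality, I would consider the difference
\[
\tilde\zeta(s) := \zeta_n^\eta(s;M^2) - \zeta_n^\eta(s;0) = \sum_{k\ge 0}\Bigl[(\lambda_{n,k}^\eta+M^2)^{-s}-(\lambda_{n,k}^\eta)^{-s}\Bigr].
\]
By the one-dimensional Weyl asymptotics $\lambda_{n,k}^\eta\sim(\pi k/a^\eta)^2$ from \eqref{Weyllaw}, a Taylor expansion in $M^2/\lambda_{n,k}^\eta$ shows that each summand is $O(k^{-2s-2})$ for large $k$, so the series converges absolutely and uniformly on a neighbourhood of $s=0$. Hence $\tilde\zeta$ is analytic there and may be differentiated term-by-term, giving $\tilde\zeta'(0)=-\sum_k\ln(1+M^2/\lambda_{n,k}^\eta)$. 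Since $d_n^\eta(M^2)/d_n^\eta(0)=e^{-\tilde\zeta'(0)}$, this yields the first equality of \eqref{ratioSLzero} and \eqref{ratioSLcurved}; convergence of the infinite product itself follows from $\sum_k 1/\lambda_{n,k}^\eta<\infty$.

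For the second equality in the flat case, I would apply the classical Hadamard factorization $J_n(z)=(z/2)^n/n!\,\prod_k(1-z^2/j_{n,k}^2)$, recognise $\lambda_{n,k}^0=j_{n,k}^2$ via \eqref{Dirichleteigen}--\eqref{eigenflat}, set $z=iM$, and use the identity $J_n(iM)=i^n I_n(M)$. For the curved cases, I would treat $f_n^\pm(1;\lambda)$ as an entire function of $\lambda$: the symmetry of the first two hypergeometric parameters in \eqref{eigenminus}--\eqref{eigenplus} under $\sqrt{4\lambda\mp 1}\to-\sqrt{4\lambda\mp 1}$ makes $f_n^\pm(1;\cdot)$ single-valued and entire in $\lambda$, with $f_n^\pm(1;0)=1$ because one of the hypergeometric parameters vanishes at $\lambda=0$, leaving only the constant term of the series. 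By \eqref{Dirichleteigen} its zeros are exactly the eigenvalues $\lambda_{n,k}^\pm$. A growth estimate $|f_n^\pm(1;\lambda)|\le C\exp(C'|\lambda|^{1/2})$, obtained from the standard large-parameter asymptotics of $F$, implies that $f_n^\pm(1;\cdot)$ has order at most $1/2$, and Hadamard's factorisation at genus zero then gives $f_n^\pm(1;\lambda)=\prod_k(1-\lambda/\lambda_{n,k}^\pm)$; setting $\lambda=-M^2$ completes the proof.

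The principal obstacle is justifying the order-$1/2$ growth bound that rules out an exponential prefactor in the Hadamard product for $f_n^\pm(1;\lambda)$. This requires either a direct analysis of the hypergeometric function when its first two parameters become large (purely imaginary in the negative-curvature case and real in the positive-curvature one), or an appeal to the general fact that, for any regular Sturm-Liouville problem on a bounded interval with Dirichlet conditions, the value at the right endpoint of the solution regular at the left endpoint is an entire function of the spectral parameter of order exactly $1/2$. Once this is granted, the identification of the infinite product with the explicit Bessel/hypergeometric formulas is straightforward algebra.
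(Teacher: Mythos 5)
Your proposal is correct and essentially coincides with the paper's proof for the first equality (term-by-term differentiation of $\zeta_n^\eta(s;M^2)-\zeta_n^\eta(s;0)$ justified by the one-dimensional Weyl law) and for the flat case (Hadamard product for $J_n$ plus $(iz)^{-n}J_n(iz)=z^{-n}I_n(z)$). For the curved cases your route genuinely differs: you invoke Hadamard's factorisation theorem, bounding the order of the entire function $f_n^\pm(1;\cdot)$ by $1/2$ so that the genus-zero canonical product, normalised by $f_n^\pm(1;0)=1$, is forced. The paper instead writes $f_n^\pm(1;z)/\prod_k(1-z/\la_{n,k}^\pm)=e^{h(z)}$ and proves $h'=0$ by Cauchy's formula on expanding circles of radius $\frac12(\la^{\pm}_{n,k-1}+\la^{\pm}_{n,k})$, using the asymptotics $f_n^{\pm}(1;z)\sim c_n^{\pm}z^{-n/2}J_n(a^{\pm}\sqrt z)$ (due to Jones) to show the logarithmic derivative is $O(|z|^{-1/2})$ on those contours. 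The analytic input is the same in both arguments --- that large-$|z|$ hypergeometric asymptotic is exactly what gives your $|f_n^\pm(1;\la)|\le C e^{C'|\la|^{1/2}}$ bound, so the ``principal obstacle'' you identify is no worse than what the paper already relies on; your appeal to the general Sturm--Liouville fact (endpoint value of the regular solution is entire of order $1/2$) is also legitimate, modulo the mild Bessel-type singularity at $r=0$ coming from the $n^2/r^2$ term, which the standard theory accommodates. Two small points you should make explicit, as the paper does: the zeros of $f_n^\pm(1;\cdot)$ must be shown to be \emph{simple} (the usual Sturm--Liouville identity expressing $\partial_\la f_n^\pm(1;\la_0)$ in terms of the norm of the eigenfunction does this), and they must exhaust the spectrum with no extraneous complex zeros (which follows from symmetry of $L_n^\pm$); otherwise the multiplicities in the canonical product would not match. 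Net effect: your argument is somewhat more standard and compact, while the paper's contour argument only needs the asymptotics along a sequence of circles kept away from the eigenvalues rather than a uniform bound in all directions.
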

\begin{proof} Weyl's law, Eq.\ \eqref{Weyllaw}, implies that the series representation for the difference of the $\zeta$ functions
\be\label{zetadif} \zeta_{n}^{\eta}(s;M^{2}) - \zeta_{n}^{\eta}(s;0) = \sum_{k\geq 0}\biggl[\frac{1}{\bigl(\la^{\eta}_{n,k}+M^{2}\bigr)^{s}} - \frac{1}{\bigl(\la^{\eta}_{n,k}\bigr)^{s}}\biggr]\ee
converges for $\re s >-1/2$. The derivative with respect to $s$ at $s=0$ can thus be computed by taking the derivative of each term in the sum. Using \eqref{SLdetdef}, this yields the infinite product representations in \eqref{ratioSLzero} and \eqref{ratioSLcurved}. These infinite products can be computed using standard ideas from complex analy\-sis, as we now explain. 

In the zero curvature case, Eq.\ \eqref{eigenflat} and \eqref{Dirichleteigen} show that the eigenvalues $\la_{n,k}^{0}$ are the squares of the zeros of the Bessel function $J_{n}$. The result then follows from the standard infinite product representation of the Bessel function in terms of its zeros and from the identity $(iz)^{-n}J_{n}(iz) = z^{-n}I_{n}(z)$. 

The cases of non-zero curvature are conceptually similar. Let us first note that, by Weyl's law \eqref{Weyllaw}, the infinite products $\prod_{k\geq 0}(1-z/\la^{\pm}_{n,k})$ define entire functions of the complex variable $z$, with simple zeros at $z=\la^{\pm}_{n,k}$. The functions $f_{n}^{\pm}(1;z)$ are also entire. As a consequence of the Dirichlet condition \eqref{Dirichleteigen}, the zeros of these functions coincide with the eigenvalues $\la_{n,k}^{\pm}$; in particular, they are all real and positive. One can also check that these zeros are all simple. The ratio
\be\label{ratioSLfun} \frac{f_{n}^{\pm}(1;z)}{\prod_{k\geq 0}(1-z/\la^{\pm}_{n,k})}= e^{h(z)}\ee
is thus a non-vanishing entire function and can be written as the exponential of an entire function $h$. Consider then
\be\label{logder} h'(z) =\frac{\partial_{z}f_{n}^{\pm}}{f_{n}^{\pm}} - \sum_{k\geq 0}\frac{1}{z-\la^{\pm}_{n,k}}\, \cdotp\ee
If we could prove that this derivative vanishes, we would conclude that $h$ is a constant and, by evaluating the left-hand side of \eqref{ratioSLfun} at $z=0$ and using $f_{n}^{\pm}(1;0)=1$, we would find that this constant is zero, proving Eq.\ \eqref{ratioSLcurved}.

To derive that $h'=0$, we consider a closed contour $\mathcal C^{\pm}_{k}$ such that: i) it encircles the first $k$ eigenvalues $\smash{\la^{\pm}_{n,0},\ldots,\la^{\pm}_{n,k-1}}$; ii) it never comes close to any eigenvalue, i.e.\ the distance between $\mathcal C^{\pm}_{k}$ and any eigenvalue is always greater than a strictly positive, $k$-independent constant (this can always be achieved because the distance between successive eigenvalues increases with $k$ at large $k$ according to Weyl's law, Eq.\ \eqref{Weyllaw}); iii) $|z|\rightarrow\infty$ along the contour when $k\rightarrow\infty$. Concretely, we can choose $\mathcal C^{\pm}_{k}$ to be a circle centered at $z=0$ and of radius $\smash{R^{\pm}_{k}=\frac{1}{2}(\la^{\pm}_{n,k-1}+\la^{\pm}_{n,k})}$. Using Cauchy theorem, we note that
\be\label{intrdef} \frac{1}{2i\pi}\oint_{\mathcal C_{k}}\frac{\d z'}{z'-z}\frac{\partial_{z}f_{n}^{\pm}}{f_{n}^{\pm}}(1;z') = \frac{\partial_{z}f_{n}^{\pm}}{f_{n}^{\pm}}\bigl(1;z\bigr) - \sum_{k'=0}^{k-1}\frac{1}{z-\la^{\pm}_{n,k'}}\,\cvp\ee
for all $z$ in the interior of the circle $\mathcal C_{k}$. When $k\rightarrow\infty$, the right-hand side of this equation converges to $h'(z)$. To evaluate the left-hand side, we use an asymptotic analysis of $f_{n}^{\pm}$. The result we need is discussed in \cite{Jones} and can be cast in the form
\be\label{fnasymp} f_{n}^{\pm}(1;z) \underset{|z|\rightarrow\infty}{\sim} c^{\pm}_{n} z^{-n/2} J_{n}(a^{\pm}\sqrt{z})\, ,\ee
where the $a^{\pm}$ are defined in \eqref{aetadef} and the $c_{n}^{\pm}$ depend on $r_{0}$ and $n$ but not on $z$. Using the standard asymptotics for the Bessel function $J_{n}$, we get in this way
\begin{multline}\label{logfnasymp} \frac{\partial_{z}f_{n}^{\pm}}{f_{n}^{\pm}}(z) = 
-\frac{n}{2z}-\frac{a^{\pm}}{2\sqrt{z}}\Biggl[\frac{1}{2a^{\pm}\sqrt{z}} + \Bigl(1+ O\bigl(1/z\bigr)\Bigr)\tan\Bigl(a^{\pm}\sqrt{z}-\frac{n\pi}{2}-\frac{\pi}{4}\Bigr)\\
+ \frac{4n^{2}-1}{8a^{\pm}\sqrt{z}}\Bigl(1+ O\bigl(1/z\bigr)\Bigr)\cos^{-2}\Bigl(a^{\pm}\sqrt{z}-\frac{n\pi}{2}-\frac{\pi}{4}\Bigr)\Biggr]\, .\end{multline}
On the contours $\mathcal C^{\pm}_{k}$, we have $\smash{z = R_{k}^{\pm}e^{i\alpha}}$. When $|\alpha|>\varepsilon$, where $\varepsilon$ is a small, $k$-independent strictly positive constant, one deduces from \eqref{logfnasymp} and the elementary properties of the tangent and cosine functions for complex arguments with a non-zero imaginary part, that $\smash{|\partial_{z}f_{n}^{\pm}/f_{n}^{\pm}|}$ is bounded above by $\smash{C/\sqrt{R_{k}^{\pm}}}$, for some $k$-independent constant $C$. When $|\alpha|\leq\varepsilon$, this is also true, thanks to the fact that, by construction, the distance between the contour and the eigenvalues $\la_{n,k'}^{\pm}$, or equivalently the poles of the tangent and inverse cosine functions appearing in the right-hand side of Eq.\ \eqref{logfnasymp}, is always strictly greater than a fixed, $k$-independent constant. When $k\rightarrow\infty$, the integrand in the left-hand side of Eq.\ \eqref{intrdef} thus goes to zero as $\smash{1/(R_{k}^{\pm})^{3/2}}$ and the integral itself as $\smash{1/\sqrt{R_{k}^{\pm}}}$.
\end{proof}

\section{\label{ratioSec}Ratio of determinants}

\subsection{Naive discussion}

Naively, following the same logic as in Section \ref{SLSec}, and using \eqref{ratioSLzero}, we could be tempted to write, e.g.\ in the flat case,
\be\label{naiveratio} \frac{D^{0}(M^{2})}{D^{0}(0)} =\prod_{n\in\mathbb Z}\prod_{k\geq 0}\biggl[1+\frac{M^{2}}{\la^{0}_{n,k}}\biggr] = I_{0}(M)\prod_{n\geq 0}\biggl[\frac{2^{n}n!}{M^{n}}I_{n}(M)\biggr]^{2}\, .\ee
However, using the large $n$ asymptotics of $I_{n}(z)$,\footnote{This expansion is valid uniformly in $z$, as long as $|\arg z|<\frac{\pi}{2} - \epsilon$ for any small $\epsilon>0$ and $z$ is not greater than $O(n)$.}
\be\label{Inasymp1} I_{n}(z) = \frac{z^{n}}{\sqrt{2\pi}}\frac{e^{\sqrt{n^{2}+z^{2}}}}{(n^{2}+z^{2})^{1/4} (n+\sqrt{n^{2}+z^{2}})^{n}}\biggl[1-\frac{1}{12\sqrt{n^{2}+z^{2}}} + O\bigl(1/n^{2}\bigr)\biggr]\, ,\ee
we realize that the infinite product is divergent, since
\be\label{asympprod1} \ln\Bigl(\frac{2^{n}n!}{M^{n}}I_{n}(M)\Bigr) = \frac{M^{2}}{4 n} + O\bigl(1/n^{2}\bigr)\, .\ee
In contrast to one-dimensional Sturm-Liouville operators, taking the ratio of two-dimensional operators does not yield a finite answer. In particular, the product of the regularized Sturm-Liouville determinants does not yield the regularized two-dimensional determinant.

\subsection{Correct discussion}

The difficulty with the ratio of two-dimensional determinants comes from the fact that, when we consider the subtraction
\be\label{zetadif2d} \zeta^{\eta}(s;M^{2}) - \zeta^{\eta}(s;0) = \sum_{p\geq 0}\biggl[
\frac{1}{\bigl(\la_{p}^{\eta}+M^{2}\bigr)^{s}} - \frac{1}{\bigl(\la_{p}^{\eta}\bigr)^{s}}\biggr]\, ,\ee
Weyl's law, Eq.\ \eqref{Weyllaw}, shows that the resulting series representation converges only for $\re s>0$. Since $s=0$ is not in the interior of the convergence region, we cannot take the derivative under the summation and then set $s=0$; the resulting series would diverge. This is in contrast to what happens for one-dimensional determinants, see Eq.\ \eqref{zetadif} and below.

To cure the problem, we thus consider 
\be\label{diffzeta2} \sum_{p\geq 0}\biggl[\frac{1}{\bigl(\la_{p}^{\eta}+M^{2}\bigr)^{s}} - \frac{1}{\bigl(\la_{p}^{\eta}\bigr)^{s}}+\frac{sM^{2}}{\bigl(\la_{p}^{\eta}\bigr)^{s+1}}\biggr] = \zeta^{\eta}(s;M^{2})-\zeta^{\eta}(s;0)+ sM^{2}\zeta^{\eta}(s+1;0)\, ,\ee
which converges for $\re s>-1$. We can thus take the derivative of both sides of this equation with respect to $s$ and then take the limit $s\rightarrow 0$. To obtain the correct limit of the right-hand side, one must recall that the $\zeta$-functions $\zeta^{\eta}(s;M^{2})$ have a single pole at $s=1$, with residue given by Weyl's law,
\be\label{residue} \res_{s=1}\zeta^{\eta} = \frac{A^{\eta}}{4\pi}\,\cdotp\ee
This is a standard result that can be derived by using the heat kernel expansion, see \cite{Gilkey,bfref}. Introducing the so-called ``integrated two-point function at coinciding points''
\be\label{Cconstant} C^{\eta} = \lim_{s\rightarrow 1}\Bigl(\zeta^{\eta}(s;0) - \frac{A^{\eta}}{4\pi}\frac{1}{s-1}\Bigr)\ee
we get the correct infinite product formula for the ratio of two-dimensional determinants,
\be\label{ratioformula} \frac{D^{\eta}(M^{2})}{D^{\eta}(0)} = e^{C^{\eta} M^{2}}\prod_{p\geq 0}\biggl[\frac{\la_{p}^{\eta}+M^{2}}{\la_{p}^{\eta}}e^{-M^{2}/\la_{p}^{\eta}}\biggr]\, .\ee
Note that the right-hand side of this formula takes the form of a Weierstrass canonical product associated with the sequence $(\la_{p}^{\eta})_{p\geq 0}$. This is consistent with the general analysis presented in \cite{Voros}, where the relevance of the Weierstrass canonical product in $\zeta$-function regularisation was discussed.

\section{\label{flatSec}Determinants on the flat disk}

In the flat case, Eq.\ \eqref{ratioformula} reads
\be\label{ratflat} \frac{D^{0}(M^{2})}{D^{0}(0)} = e^{C^{0} M^{2}}\prod_{n\in\mathbb Z} \prod_{k\geq 0}\Biggl[\biggl(1+\frac{M^{2}}{\la^{0}_{n,k}}\biggr)e^{-M^{2}/\la^{0}_{n,k}}\Biggr]\, .\ee
The product over $k$ is evaluated as
\be\label{prodkeva} \prod_{k\geq 0}\Biggl[\biggl(1+\frac{M^{2}}{\la^{0}_{n,k}}\biggr)e^{-M^{2}/\la^{0}_{n,k}}\Biggr] =\Biggl[ \prod_{k\geq 0}\biggl(1+\frac{M^{2}}{\la^{0}_{n,k}}\biggr)\Biggl] e^{-M^{2}\sum_{k\geq 0}\frac{1}{\la^{0}_{n,k}}} =\frac{2^{n}n!}{M^{n}}I_{n}(M)e^{-M^{2}\zeta^{0}_{n}(1;0)}\, ,\ee
where we have used Eq.\ \eqref{ratioSLzero} and the definition of the Sturm-Liouville $\zeta$ function given in Eq.\ \eqref{zetaSL}. Overall we thus obtain
\be\label{resflat1} \frac{D^{0}(M^{2})}{D^{0}(0)}= e^{C^{0}M^{2}}I_{0}(M)e^{-M^{2}\zeta^{0}_{0}(1;0)}\prod_{n\geq 1}\biggl[\frac{2^{n}n!}{M^{n}}I_{n}(M)e^{-M^{2}\zeta^{0}_{n}(1;0)}\biggr]^{2}\, .\ee
The constants $\zeta^{0}_{n}(1;0)$ can be easily computed by expanding the second equality in Eq.\ \eqref{ratioSLzero} in power of $M^{2}$ and identifying the coefficients of $M^{2}$, using
\be\label{Besselseries} I_{n}(z) = (z/2)^{n}\sum_{k=0}^{\infty}\frac{(z/2)^{2k}}{k!(n+k)!}\,\cdotp\ee
One finds
\be\label{zetanone} \zeta^{0}_{n}(1;0) = \frac{1}{4(n+1)}\,\cdotp\ee

To compute $C^0$ requires more work. This constant is related to the divergence of the series representation
\be\label{zetazetan} \zeta^{0}(s;M^{2}) = \zeta^{0}_{0}(s;M^{2})+2\sum_{n\geq 1}\zeta^{0}_{n}(s;M^{2})\ee
when $s\rightarrow 1^{+}$ and thus to the large $n$ behaviour of $\zeta^{0}_{n}(s;M^{2})$. We are going to show that
\be\label{zetanlargen} \zeta^{0}_{n}(s;0)\underset{n\rightarrow\infty}{\sim}\frac{1}{4\sqrt{\pi}}\frac{\Gamma(s-1/2)}{s\Gamma(s)}\frac{1}{n^{2s-1}}\,\cdotp\ee
Assuming for the moment that this is correct, consider
\begin{multline}\label{sumzetansub}\sum_{n\geq 1}\biggl(\zeta^{0}_{n}(s;0) - \frac{1}{4\sqrt{\pi}}\frac{\Gamma(s-1/2)}{s\Gamma(s)}\frac{1}{n^{2s-1}}\biggr)
=\\ \frac{1}{2}\zeta^{0}(s;0) - \frac{1}{2}\zeta^{0}_{0}(s;0) 
- \frac{1}{4\sqrt{\pi}}\frac{\Gamma(s-1/2)}{s\Gamma(s)}\zeta_{\text R}(2s-1)\, .\end{multline}
The series on the left-hand side converges for $\re s>0$. Taking the $s\rightarrow 1$ limit on both sides of the equality yields
\be\label{Cvalue} C^0 = \frac{1}{2}\Bigl(\gamma -1 -\ln 2\Bigr)\, .\ee

To prove \eqref{zetanlargen}, we use the following contour integral representation
\be\label{contourzetan} \zeta^{0}_{n}(s;0)=\int_{\Gamma}\frac{\d \la}{2i\pi} \frac{r_{n}(\la)}{\la^{s}}\, \cvp\ee
which is valid when $\re s>1/2$, where the resolvent function is
\be\label{rndef} r_{n}(\la) = \sum_{k\geq 0}\frac{1}{\la-\la^{0}_{n,k}} = -\biggl[\frac{1}{2z}\frac{\d}{\d z}\ln\Bigl(\bigl(z/2\bigr)^{-n}I_{n}(z)\Bigr)\biggr]\bigl(z=i\sqrt{\la}\bigr)\, ,\ee
and the contour $\Gamma$ encircles counterclockwise all the $\la^{0}_{n,k}$ (we take it to be close to the positive real $\la$-axis, going from $+\infty$ to $\la^{0}_{n,0}-\epsilon$ with a small positive imaginary part and going back from $\smash{\la^{0}_{n,0}-\epsilon}$ to $+\infty$ with a small negative imaginary part). Using the large $|\la|$ behaviour of $I_{n}(i\sqrt{\la})=i^{n}J_{n}(\sqrt{\la})$ for $\epsilon<|\arg \la|<\pi-\epsilon$, it is straightforward to check that the contour can be deformed to go around the branch cut on the negative real $\la$-axis counterclockwise. With $v=-\la$, this yields
\be\label{contourzetan2} \zeta^{0}_{n}(s;0)=-\frac{\sin(\pi s)}{\pi}\int_{0}^{\infty}\!\d v\, \frac{r_{n}(-v)}{v^{s}}\ee
when $1/2<\re s<1$.\footnote{Here we have ignored the contribution of the small arc around the branch point which connects the two legs of the deformed contour. This is valid as long as $\re s<1$.}
Using the uniform asymptotics \eqref{Inasymp1}, we find, with $z=\sqrt{v}$,
\be\label{uniasused} \biggl[\frac{1}{2z}\frac{\d}{\d z}\ln\Bigl(\bigl(z/2\bigr)^{-n}I_{n}(z)\Bigr)\biggr]\bigl(z=\sqrt{v}\bigr)\underset{n\rightarrow\infty}{\sim} \frac{1}{2n}\frac{1}{1+\sqrt{1+v/n^{2}}}\,\cvp\ee
which, together with \eqref{rndef}, yields 
\be\label{zetanas3} \zeta^{0}_{n}(s;0)\underset{n\rightarrow\infty}{\sim}
\frac{\sin(\pi s)}{2\pi}\frac{1}{n^{2s-1}}\int_{0}^{\infty}\frac{\d x}{x^{s}(1+\sqrt{1+x})}\, \cdotp\ee
Performing the integral yields \eqref{zetanlargen}, a priori for $1/2<\re s<1$, but then for any $s$ by analytic continuation.

Let us note that Eq.\ \eqref{Cvalue} will be rederived in the next section from a more general method. 

Putting everything together, Eqs.\ \eqref{zeromassfinal},  \eqref{resflat1} and \eqref{Cvalue}, we get the explicit exact formula \eqref{detzero} for the determinant in flat space.

\noindent\emph{Remark:}

Combining \eqref{detzero} and \eqref{detexpan2} we get the large $z$ asymptotic expansion
\begin{multline}\label{largezasderived} \ln I_{0}(z) + 2\sum_{n=1}^{\infty}\ln\Bigl[\bigl(z/2\bigr)^{-n}n!I_{n}(z)e^{-\frac{z^{2}}{4(n+1)}}\Bigr] = \\
\frac{z^{2}}{2}\bigl(2+\ln 2 -\gamma-\ln z \bigr)-\frac{\pi z}{2}+\frac{1}{3}\ln z \\ 
-\frac{1}{3}\ln 2 + \frac{1}{2}\ln(2\pi) + \frac{5}{12}+2\zeta_{\text R}'(-1) - \frac{\pi}{128 z} + O\bigl(1/z^{2}\bigr)\, .
\end{multline}
Higher order terms could be obtained straightforwardly from the heat kernel expansion. It seems rather non-trivial to derive this expansion directly from the series representation on the left-hand side of the above equation, because we do not know an approximation of $I_{n}(z)$ at large $z$ uniformly in $n$ (whereas \eqref{Inasymp1} is an approximation of $I_{n}(z)$ at large $n$ uniformly in $z$). 

In the case of negative curvature, for which a similar formula for the determinant, Eq.\ \eqref{deteta}, will be derived below, the heat kernel method is useless to derive the interesting large $\ell$ asymptotics. It seems that this can be done in full generality only if a method to treat directly the asymptotics of the infinite series is devised. See Section \ref{specialSec2} for a further discussion in the special cases $M^{2}=q(q+1)$, $q\in\mathbb N$.

\section{\label{curveSec}Determinants on the curved disks}

\subsection{General analysis}

Starting from Eq.\ \eqref{ratioformula} and using \eqref{ratioSLcurved}, we get
\be\label{reshyper1}\frac{D^{\pm}(M^{2},r_{0})}{D^{\pm}(0,r_{0})} = e^{C^{\pm}M^{2}}f^{\pm}_{0}(1,-M^{2})e^{-M^{2}\zeta^{\pm}_{0}(1;0)}\prod_{n\geq 1}\biggl[f^{\pm}_{n}(1;-M^{2})e^{-M^{2}\zeta^{\pm}_{n}(1;0)}\biggr]^{2}\, ,\ee
which is the curved space version of \eqref{resflat1}.

The constants $\zeta^{\pm}_{n}(1;0)$ are computed by expanding the second equality in Eq.\ \eqref{ratioSLcurved} in power of $M^{2}$ and identifying the coefficients of $M^{2}$. This yields
\be\label{zetahtwo} \zeta_{n}^{\pm}(1;0) = -\partial_{\lambda}f_{n}^{\pm}(1;\la=0) =\mp (\partial_{1}-\partial_{2})F\bigl(1,0,n+1,\pm A^{\pm}/(4\pi)\bigr)\, ,\ee
where $\partial_{1}$ and $\partial_{2}$ are the partial derivatives with respect to the first and second argument of the hypergeometric function, respectively. From the usual series representation of the hypergeometric function, we get $\partial_{1}F(1,0,n+1,z)=0$ and
\be\label{hypergeomcalc}\partial_{2}F(1,0,n+1,z) = \frac{z}{n+1}F(1,1,n+2,z)\, ,\ee
from which we obtain
\be\label{zetanhypone} \zeta^{\pm}_{n}(1;0) = \frac{A^{\pm}}{4\pi}\frac{1}{n+1}F\bigl(1,1,n+2,\pm A^{\pm}/(4\pi)\bigr)\, .\ee

As in flat space, the computation of the constant $C^{\pm}$, defined in Eq.\ \eqref{Cconstant}, is the trickier part. We could try to follow the direct approach used in Section \ref{flatSec}. This would require to start by generalizing Eq.\ \eqref{zetanlargen}. Instead, we prefer to use a more general approach, which will also allow to find again the result \eqref{Cvalue}.

We are going to use the fact that a nice transformation law exists for $C$ as a function of the metric, when one performs a Weyl rescaling. This is similar to the conformal anomaly that we have used in Section \ref{zeromassSec} to compute the determinants at zero mass. We follow some of the reasonings in \cite{FKZ}; the case of the disk topology we are interested in is actually simpler than the closed compact case studied in this reference.\footnote{The reason is that on a closed compact surface there is a zero mode that yields non-trivial contributions. This zero mode is absent on a disk with Dirichlet boundary condition.}

We consider a disk with an arbitrary metric $g$. We note as usual the eigenvalues of the Laplacian with Dirichlet boundary conditions $\la_{p}$, with $0<\la_{0}<\la_{1}<\cdots$, and the corresponding eigenfunctions $\psi_{p}$. The space-dependent $\zeta$ function is defined by
\be\label{zetatotalgen} \zeta(s,x,y;g) = \sum_{p\geq 0}\frac{\psi_{p}(x)\psi_{p}(y)}{\la_{p}^{s}}\, \cvp\ee
generalizing \eqref{zetadef}. The two-point function is
\be\label{Gdef} G(x,y;g) = \zeta(1,x,y;g)\, .\ee
Note that these formulas have obvious generalizations to $M\not = 0$, but, for the present discussion, whose aim is to compute $C$, we need only the case of zero mass. 

Using standard perturbation theory to compute the infinitesimal variations of the eigenvalues and of the eigenfunctions under an infinitesimal Weyl rescaling of the metric, $\delta g = 2\delta\sigma g$, under which the Laplacian varies as $\delta\Delta = -2\delta\sigma \Delta$, one finds that the two-point function is invariant,
\be\label{GWeylinv} G\bigl(x,y;e^{2\sigma}g_{0}\bigr) =G\bigl(x,y;g_{0}\bigr) \, .\ee
The two-point function at coinciding point may be defined in two natural ways. One is to subtract directly the logarithmic divergence,
\be\label{Gren1} G_{\text R}(x;g) = \lim_{y\rightarrow x}\Bigl( G(x,y;g) + \frac{1}{2\pi}\ln d(x,y;g)\Bigr)\, ,\ee
where $d(x,y)$ denotes the geodesic distance, and another is to subtract instead the pole at $s=1$ in the zeta function,
\be\label{Grenzeta} G_{\text R}^{(\zeta)}(x;g) = \lim_{s\rightarrow 1}\Bigl(\zeta(s,x,x;g)-\frac{1}{4\pi}\frac{1}{s-1}\Bigr)\, .\ee
One can show (see e.g.\ \cite{FKZ}) that
\be\label{GRrel} G_{\text R}^{(\zeta)}(x;g) = G_{\text R}(x;g) + \frac{\gamma-\ln 2}{2\pi}\,\cdotp\ee
The function $G_{\text R}^{(\zeta)}$ (or equivalently $G_{\text R}$) thus transforms non-trivially under a Weyl rescaling, but all the dependence in the conformal factor $\sigma$ comes from the geodesic distance,
\be\label{GRtrans} G_{\text R}^{(\zeta)}\bigl(x;e^{2\sigma}g_{0}\bigr) = G_{\text R}^{(\zeta)}\bigl(x;g_{0}\bigr) + \frac{\sigma(x)}{2\pi}\,\cdotp\ee
The constant $C[g]$ defined by \eqref{Cconstant} is expressed as
\be\label{Cofgint} C[g] = \int\!\d^{2}x\sqrt{g}\, G_{\text R}^{(\zeta)}(x;g)\ee
and thus, using \eqref{GRtrans}, is also given by
\be\label{Cofgtrans} C[g]=\int\!\d^{2}x\sqrt{g_{0}}\,e^{2\sigma}\Bigl[ G_{\text R}^{(\zeta)}(x;g_{0})+\frac{\sigma(x)}{2\pi}\Bigr]\, .\ee

For instance, if $g_{0}=\delta$ is the metric for the flat disk of circumference $2\pi$ and $g=(\frac{\ell}{2\pi})^{2} \delta$ is the metric for the flat disk of circumference $\ell$, \eqref{Cofgtrans} yields
\be\label{exC} C\Bigl[\Bigl(\frac{\ell}{2\pi}\Bigr)^{2} \delta\Bigr] = \frac{A^{0}}{\pi}\biggl(  C\bigl[\delta\bigr]+\frac{1}{2}\ln\frac{\ell}{2\pi}\biggr)\, ,\ee
which agrees with the formula \eqref{Cvalue} for
\be\label{Cdelta} C[\delta] = \frac{1}{2}\bigl( \gamma - 1 -\ln 2\bigr)\, .\ee

We want to use \eqref{Cofgtrans} to compute $C[g]$ for the hyperbolic and spherical disks. We thus choose $g_{0}=\delta$ and the conformal factors as in Eq.\ \eqref{conffactor}. To proceed, we need $G_{\text R}^{(\zeta)}(x;\delta)$. The two-point function on the unit disk can be computed by the method of images and is given by
\be\label{Gonunitd} G(\vec x,\vec y;\delta) = -\frac{1}{2\pi}\ln\frac{|\vec x - \vec y|}{\sqrt{|\vec x|^{2}|\vec y|^{2}-2\vec x\cdot\vec y + 1}}\,\cdotp
\ee
It is straightforward to check that this formula satisfies all the requirements characterizing the two-point function: $G(\vec x,\vec y) = G(\vec y,\vec x)$, $\Delta_{\vec x}G(\vec x,\vec y) = \delta (\vec x-\vec y)$ for all $\vec x $ and $\vec y$ in the interior of the disk and $G(\vec x,\vec y)=0$ is $\vec x$ or $\vec y$ is on the boundary, i.e.\ $|\vec x|=1$ or $|\vec y|=1$. From \eqref{Gren1} and \eqref{GRrel} we thus get
\be\label{GRunitdisk} G_{\text R}^{(\zeta)}(\vec x;\delta) = \frac{1}{2\pi}\Bigl[\ln\bigl(1-r^{2}\bigr)+\gamma-\ln 2\Bigr]\, .\ee
As a check, using \eqref{Cofgint}, we get
\be\label{Cdeldirect} C[\delta] =\int_{0}^{1}\!\d r \, r \Bigl[\ln\bigl(1-r^{2}\bigr)+\gamma-\ln 2\Bigr] = \frac{1}{2}\bigl(\gamma-1-\ln2\bigr)\, ,\ee
which is consistent with \eqref{Cdelta}. Finally, Eq.\ \eqref{Cofgtrans} yields the constants $C=C^{\pm}$ for the spherical and hyperbolic disks,
\begin{align} C^{\pm} &= \int_{0}^{1}\!\d r \, \frac{4r_{0}^{2}r}{(1\pm r_{0}^{2}r^{2})^{2}}\Bigl[\ln\bigl(1-r^{2}\bigr)+\gamma-\ln 2+\ln\frac{2r_{0}}{1\pm r_{0}^{2}r^{2}}\Bigr]\\\label{Cfinal} &=  \frac{A^{\pm}}{2\pi}\Bigl(\gamma -1 + \ln r_{0}\Bigr)\, . 
\end{align}

Putting together \eqref{zeromassfinal}, \eqref{zeromasscurv}, \eqref{reshyper1}, \eqref{zetanhypone} and \eqref{Cfinal}, we find Eq.\ \eqref{deteta}.

\subsection{Special case: determinant on a hemisphere}

We will now use the formula for the determinant on a disk with constant positive curvature to obtain an exact expression for the determinant on a hemisphere\footnote{We thank the referee who asked us to consider this special case.}. In this special case, the boundary of the disk is a geodesic, and consequently we can make contact with the results derived previously in the literature. In particular, our result will match exactly with the form of the determinant that can be obtained from the  computations done in \cite{KMW}.

When the positive curvature disk corresponds to a hemisphere of unit radius, we have $r_0=1$. The area of the disk is then 
$A^+(1)=2\pi$. Moreover, the hypergeometric functions that appear in the expression of the determinant reduce to 
\begin{equation}
\begin{split}
f_n^+(1;-M^2)
&=F\Big(\frac{1}{2}(1+\sqrt{1-4M^2}),\frac{1}{2}(1-\sqrt{1-4M^2}),|n|+1,\frac{1}{2}\Big)\\
&=\frac{2^{-|n|}\sqrt{\pi}\Gamma(|n|+1)}{\Gamma\Big(\frac{1}{4}(2|n|+3+\sqrt{1-4M^2})\Big)\Gamma\Big(\frac{1}{4}(2|n|+3-\sqrt{1-4M^2})\Big)},
\end{split}
\end{equation}
and
\begin{equation}
\begin{split}
F\Big(1,1,|n|+2,\frac{1}{2}\Big)
&=(|n|+1)\biggl[\psi\Big(\frac{|n|+2}{2}\Big)-\psi\Big(\frac{|n|+1}{2}\Big)\biggr],
\end{split}
\end{equation}
where $\psi(z)=\Gamma'(z)/\Gamma(z)$ is the digamma function. Using these expressions, and replacing the Gamma function appearing in the formula by a ratio of Barnes $\mathsf{G}$-functions according to $\Gamma(z)=\mathsf{G}(z+1)/\mathsf{G}(z)$, we get the following form for the infinite series appearing in the logarithm of the determinant (see \eqref{deteta}),
%\begin{equation}
%\begin{split}
%&\sum_{n\in\mathbb Z}\ln\Bigg[f_n^+(1;-M^2)e^{-\frac{M^2}{2(|n|+1)}F\Big(1,1,|n|+2,\frac{1}{2}\Big)}\Bigg]\\
%%%%%%%%%%%%%%%%%%%%%%%%%%%%%%%%%%%%%%%%%%%%%%%%%%%%%%%%%
%&=\ln\Bigg[\frac{\sqrt{\pi}}{\Gamma\Big(\frac{1}{4}(3+\sqrt{1-4M^2})\Big)\Gamma\Big(\frac{1}{4}(3-\sqrt{1-4M^2})\Big)}\Bigg]-\frac{M^2}{2}\Big[\psi(1)-\psi\Big(\frac{1}{2}\Big)\Big]\\
%&\quad+2\sum_{n=1}^\infty\Bigg\{\ln\Bigg[\frac{2^{-n}\sqrt{\pi}\Gamma(n+1)}{\Gamma\Big(\frac{n}{2}+\frac{1}{4}(3+\sqrt{1-4M^2})\Big)\Gamma\Big(\frac{n}{2}+\frac{1}{4}(3-\sqrt{1-4M^2})\Big)}\Bigg]\\
%&\qquad\qquad\quad-\frac{M^2}{2}\Big[\psi\Big(\frac{n+2}{2}\Big)-\psi\Big(\frac{n+1}{2}\Big)\Big]\Bigg\}.
%\end{split}
%\label{infiniteseries:hemisphere}
%\end{equation}
%We find it convenient to express the Gamma function appearing in the  above equation in terms of the Barnes G-function $G(z)$ by using the relation $\Gamma(z)=\frac{G(z+1)}{G(z)}$. This leads to the following form for the above series:
\begin{equation}\label{infiniteseries:hemisphere}
\begin{split}
&\sum_{n\in\mathbb Z}\ln\Bigg[f_n^+(1;-M^2)e^{-\frac{M^2}{2(|n|+1)}F\Big(1,1,|n|+2,\frac{1}{2}\Big)}\Bigg]\\
%%%%%%%%%%%%%%%%%%%%%%%%%%%%%%%%%%%%%%%%%%%%%%%%%%%%%%%%
&=\frac{1}{2}\ln\pi+\ln\Bigg[\frac{\mathsf{G}\Big(\frac{1}{4}(3+\sqrt{1-4M^2})\Big)\mathsf{G}\Big(\frac{1}{4}(3-\sqrt{1-4M^2})\Big)}{\mathsf{G}\Big(\frac{1}{4}(7+\sqrt{1-4M^2})\Big)\mathsf{G}\Big(\frac{1}{4}(7-\sqrt{1-4M^2})\Big)}\Bigg]-\frac{M^2}{2}\Big[\psi(1)-\psi\Big(\frac{1}{2}\Big)\Big]\\
&\quad+2\lim_{N\rightarrow\infty}\sum_{n=1}^N\Bigg\{\ln\Bigg[\frac{\mathsf{G}\Big(\frac{n}{2}+\frac{1}{4}(3+\sqrt{1-4M^2})\Big)\mathsf{G}\Big(\frac{n}{2}+\frac{1}{4}(3-\sqrt{1-4M^2})\Big)}{\mathsf{G}\Big(\frac{n+2}{2}+\frac{1}{4}(3+\sqrt{1-4M^2})\Big)\mathsf{G}\Big(\frac{n+2}{2}+\frac{1}{4}(3-\sqrt{1-4M^2})\Big)}\Bigg]\\
&\qquad\qquad\qquad-\frac{M^2}{2}\Big[\psi\Big(\frac{n+2}{2}\Big)-\psi\Big(\frac{n+1}{2}\Big)\Big]-n\ln 2+\frac{1}{2}\ln \pi+\ln \Big[\frac{\mathsf{G}(n+2)}{\mathsf{G}(n+1)}\Big]\Bigg\}.
\end{split}
\end{equation}
Note that the mutual cancellation of the terms in the finite sum above gives
\begin{equation}
\begin{split}
&\sum_{n=1}^N\Bigg\{\ln\Bigg[\frac{\mathsf{G}\Big(\frac{n}{2}+\frac{1}{4}(3+\sqrt{1-4M^2})\Big)\mathsf{G}\Big(\frac{n}{2}+\frac{1}{4}(3-\sqrt{1-4M^2})\Big)}{\mathsf{G}\Big(\frac{n+2}{2}+\frac{1}{4}(3+\sqrt{1-4M^2})\Big)\mathsf{G}\Big(\frac{n+2}{2}+\frac{1}{4}(3-\sqrt{1-4M^2})\Big)}\Bigg]\\
&\qquad-\frac{M^2}{2}\Big[\psi\Big(\frac{n+2}{2}\Big)-\psi\Big(\frac{n+1}{2}\Big)\Big]-n\ln 2+\frac{1}{2}\ln \pi+\ln \Big[\frac{\mathsf{G}(n+2)}{\mathsf{G}(n+1)}\Big]\Bigg\}\\
%%%%%%%%%%%%%%%%%%%%%%%%%%%%%%%%%%%%%%%%%%%%%%%%%%%%%%%%
&=-\frac{N(N+1)}{2}\ln 2+\frac{N}{2}\ln\pi+\ln \Big(\mathsf{G}(N+2)\Big)-\frac{M^2}{2}\Big[\psi\Big(\frac{N+2}{2}\Big)-\psi(1)\Big]\\
&\quad-\ln\Bigg[\frac{\mathsf{G}\Big(\frac{N+2}{2}+\frac{1}{4}(3+\sqrt{1-4M^2})\Big)\mathsf{G}\Big(\frac{N+1}{2}+\frac{1}{4}(3+\sqrt{1-4M^2})\Big)}{\mathsf{G}\Big(\frac{1}{2}+\frac{1}{4}(3+\sqrt{1-4M^2})\Big)\mathsf{G}\Big(1+\frac{1}{4}(3+\sqrt{1-4M^2})\Big)}\Bigg]\\
&\quad-\ln\Bigg[\frac{\mathsf{G}\Big(\frac{N+2}{2}+\frac{1}{4}(3-\sqrt{1-4M^2})\Big)\mathsf{G}\Big(\frac{N+1}{2}+\frac{1}{4}(3-\sqrt{1-4M^2})\Big)}{\mathsf{G}\Big(\frac{1}{2}+\frac{1}{4}(3-\sqrt{1-4M^2})\Big)\mathsf{G}\Big(1+\frac{1}{4}(3-\sqrt{1-4M^2})\Big)}\Bigg].
\end{split}
\end{equation}
Here we have used the fact that $\mathsf{G}(1)=1$. To evaluate the $N\rightarrow\infty$ limit of the above expression, we need to use the following asymptotic forms of the functions $\psi(z)$ and $\mathsf{G}(z)$ in the large $z$ regime:
\begin{equation}
\begin{split}
&\psi(z)= \ln z+O(1/z),\\
&\mathsf{G}(z+1)= \frac{z^2}{2}\ln z-\frac{3z^2}{4}+\frac{z}{2}\ln(2\pi)-\frac{1}{12}\ln z+\frac{1}{12}-\ln \mathsf{A}+O(1/z^2),
\end{split}
\end{equation}
where $\mathsf{A}$ is the Glaisher-Kinkelin constant. Using these asymptotic forms, one can check that all the divergent pieces in the $N\rightarrow\infty$ limit cancel each other, as they should, and the series in \eqref{infiniteseries:hemisphere} reduces to
\begin{equation}
\begin{split}
&\sum_{n\in\mathbb Z}\ln\Bigg[f_n^+(1;-M^2)e^{-\frac{M^2}{2(|n|+1)}F\Big(1,1,|n|+2,\frac{1}{2}\Big)}\Bigg]\\
%%%%%%%%%%%%%%%%%%%%%%%%%%%%%%%%%%%%%%%%%%%%%%%%%%%%%%%%
&=-\frac{1}{2}\ln\pi-\frac{1}{6}\ln 2+6\ln \mathsf{A}-\frac{1}{2}+\frac{M^2}{2}\Big[\psi(1)+\psi\Big(\frac{1}{2}\Big)\Big]\\
&\quad+\ln\Bigg[\mathsf{G}\Big(\frac{1}{4}(3+\sqrt{1-4M^2})\Big)\mathsf{G}\Big(\frac{1}{4}(5+\sqrt{1-4M^2})\Big)^2 \mathsf{G}\Big(\frac{1}{4}(7+\sqrt{1-4M^2})\Big)\Bigg]\\
&\quad+\ln\Bigg[\mathsf{G}\Big(\frac{1}{4}(3-\sqrt{1-4M^2})\Big)\mathsf{G}\Big(\frac{1}{4}(5-\sqrt{1-4M^2})\Big)^2 \mathsf{G}\Big(\frac{1}{4}(7-\sqrt{1-4M^2})\Big)\Bigg].
\end{split}
\end{equation}
This can be further simplified by using the following duplication formula for the Barnes $\mathsf{G}$-function,
\begin{multline}
\ln\Big[ \mathsf{G}(x)\mathsf{G}(x+\frac{1}{2})^2\mathsf{G}(x+1)\Big]\\
=\frac{1}{4}-3\ln \mathsf{A}+\Big(-2x^2+3x-\frac{11}{12}\Big)\ln 2+\Big(x-\frac{1}{2}\Big)\ln\pi+\ln \mathsf{G}(2x)\, ,
\end{multline}
which gives
\begin{multline}
\sum_{n\in\mathbb Z}\ln\Bigg[f_n^+(1;-M^2)e^{-\frac{M^2}{2(|n|+1)}F\Big(1,1,|n|+2,\frac{1}{2}\Big)}\Bigg]\\
%%%%%%%%%%%%%%%%%%%%%%%%%%%%%%%%%%%%%%%%%%%%%%%%%%%%%%%%
%&=M^2\Big[\frac{1}{2}\psi(1)+\frac{1}{2}\psi\Big(\frac{1}{2}\Big)+\ln 2\Big]+\ln\Bigg[G\Big(\frac{1}{2}(3+\sqrt{1-4M^2})\Big)G\Big(\frac{1}{2}(3-\sqrt{1-4M^2})\Big)\Bigg]\\
%%%%%%%%%%%%%%%%%%%%%%%%%%%%%%%%%%%%%%%%%%%%%%%%%%%%%%%%
=-\gamma M^2+\ln\Bigg[\mathsf{G}\Big(\frac{1}{2}(3+\sqrt{1-4M^2})\Big)\mathsf{G}\Big(\frac{1}{2}(3-\sqrt{1-4M^2})\Big)\Bigg].
\end{multline}
In deriving the coefficient of $M^2$, we have used the fact that 
\be
\psi(1)=-\gamma\, ,\quad \psi(\frac{1}{2})=-\gamma-2\ln(2)\, .
\ee

Plugging the above expression of the infinite series into the determinant given in \eqref{deteta}, we get
\begin{multline}
\ln D^+(M^2,1)
=-\frac{1}{2}\ln(2\pi)+\frac{1}{4}-2\zeta_{\text R}'(-1)- M^2 \\
\quad+\ln\Bigg[\mathsf{G}\Big(\frac{1}{2}(3+\sqrt{1-4M^2})\Big) \mathsf{G}\Big(\frac{1}{2}(3-\sqrt{1-4M^2})\Big)\Bigg].
\end{multline}
Finally, in order to make contact with the result derived in \cite{KMW}, we use the relation $\mathsf{G}(z+1)=\Gamma(z) \mathsf{G}(z)$ and Euler's reflection formula,
\begin{equation}
\begin{split}
\Gamma(z)\Gamma(1-z)=\frac{\pi}{\sin(\pi z)}=\frac{\pi}{\cos(\frac{\pi}{2}-\pi z)}\,\cvp
\end{split}
\end{equation}
to obtain
\begin{multline}\label{dethemisphere:final result}
\ln D^+(M^2,1)
=-\frac{1}{2}\ln(2\pi)+\frac{1}{4}-2\zeta_{\text R}'(-1)- M^2-\ln\Bigg[\frac{1}{\pi}\cos\Big(\frac{\pi}{2}(\sqrt{1-4M^2}) \Big)\Bigg]\\
\quad+\ln\Bigg[\mathsf{G}\Big(\frac{1}{2}(1+\sqrt{1-4M^2})\Big) \mathsf{G}\Big(\frac{1}{2}(1-\sqrt{1-4M^2})\Big)\Bigg].
\end{multline}
This is exactly the result that one would obtain from the computations done in \cite{KMW}.\footnote{In \cite{KMW}, the authors compute separately the determinant on the sphere and its ratio with the square of the determinant on the hemisphere. The result \eqref{dethemisphere:final result} can be obtained by considering the ratio of these two quantities. }

\section{\label{specialSec}The case $M^{2}=-\eta q(q+1)$ for $q\in \mathbb{N}$}

We now focus on the special case where the mass parameter is given either by $M^{2}=M^{2}_{-,q} = + q(q+1)$, for negative curvature, or by $M^{2}=M^{2}_{+,q} = - q(q+1)$, for positive curvature, as in Eq.\ \eqref{specialmasses}, for any positive integer $q$. In these cases, we shall be able to evaluate much more explicitly the infinite sum appearing in our formula \eqref{deteta} for the determinants. The simplification comes from the fact that the hypergeometric functions $f_{n}^{\eta}$ appearing in \eqref{deteta} can be expressed in terms of polynomials for these cases, see e.g.\ \eqref{fnoteval} and \eqref{fneval}. The case $q=1$ is crucially relevant for the quantum gravity calculations presented in \cite{Loopcalc} and the result in this case is given in Eq.\ \eqref{detspecial}. 

\subsection{General formula}

We note $P_{q}$ the Legendre polynomial of degree $q$ and $\omega_{q,k}^{\eta}(r_{0})$, $1\leq k\leq q$, the roots of the Meixner polynomial $M_{q}(x;-2q,-\eta r_{0}^{2})$. We recall that the Meixner polynomial is a degree $q$ polynomial that can be expressed in terms of the hypergeometric function as
\be\label{Meixnerdef} M_{q}(x;b,c) = (b)_{q}F(-q,-x,b,1-1/c)\, ,\ee
where $(b)_{q} = b (b+1)\cdots (b+q-1)$ is the Pochhammer symbol. In our case, we have
\be\label{Meixnerid}\begin{split} M_{q}(x;-2q,-\eta r_{0}^{2}) & = (-1)^{q}\frac{(2q)!}{q!}F\bigl(-q,-x,-2q,1+\eta/r_{0}^{2}\bigr)\\ & = \Bigl(\frac{\eta r_{0}^{2}}{1+\eta r_{0}^{2}}\Bigr)^{-q}(x+1-q)_{q}F\Bigl(-q,1+q,x+1-q,\frac{\eta r_{0}^{2}}{1+\eta r_{0}^{2}}\Bigr)\\
& = \Bigl(\frac{\eta r_{0}^{2}}{1+\eta r_{0}^{2}}\Bigr)^{-q}\prod_{k=1}^{q}\bigl(x-\omega_{q,k}^{\eta}(r_{0})\bigr)\, ,
\end{split}\ee
where the second equality can be easily checked by using the explicit series representation of the hypergeometric function. The main result of this section is as follows.
\begin{proposition} The determinants given by Eq.\ \eqref{deteta} are expressed as
\begin{multline}\label{specialdetform}
\ln D^{\eta}\bigl(-\eta q(q+1),r_{0}\bigr) = - \frac{1}{2}\ln(2\pi)-\frac{5}{12}-2\zeta_{\text R}'(-1)+\frac{\eta A^{\eta}(r_{0})}{3\pi} - \frac{1}{3}\ln r_{0}\\
-\frac{\eta q(q+1)}{2\pi}\bigl(\ln r_{0}-1\bigr)  A^{\eta}(r_{0})
-q(q+1)\frac{1-\eta r_{0}^{2}}{1+\eta r_{0}^{2}}\ln\bigl(1+\eta r_{0}^{2}\bigr)\\
+\ln P_q\Bigl(\frac{1-\eta r_0^2}{1+\eta r_0^2 }\Bigr)-2\ln\prod_{k=1}^{q}\frac{\Gamma\bigl(1+q-\omega_{q,k}^{\eta}(r_{0})\bigr)}{k!}\, .
\end{multline}
\end{proposition}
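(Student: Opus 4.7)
The plan is to substitute $M^{2}=-\eta q(q+1)$ directly into the convergent series representation \eqref{deteta} and evaluate the infinite sum in closed form. The first observation is that at this mass, $\sqrt{\mp 4\lambda\pm 1}$ at $\lambda=-M^{2}=\eta q(q+1)$ collapses to $\pm(2q+1)$ (up to a factor of $i$ in the negative curvature case), so that the hypergeometric parameters appearing in \eqref{eigenminus}--\eqref{eigenplus} reduce to $-q$ and $q+1$. In both sign cases this produces the common expression $f_{n}^{\eta}(1;-M^{2})=F(-q,q+1,|n|+1,z)$ with $z\equiv \eta r_{0}^{2}/(1+\eta r_{0}^{2})=\eta A^{\eta}/(4\pi)$. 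Applying the Meixner polynomial identity \eqref{Meixnerid} with $x=|n|+q$ rewrites this as
\begin{equation*}
f_{n}^{\eta}(1;-M^{2})=\frac{|n|!}{(|n|+q)!}\prod_{k=1}^{q}\!\bigl(|n|+q-\omega_{q,k}^{\eta}(r_{0})\bigr),
\end{equation*}
so that each $\ln f_{n}^{\eta}$ becomes a finite linear combination of $\ln(|n|+\mathrm{const})$ terms.

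For $n=0$, I would use the classical identities $F(-q,q+1,1,z)=P_{q}(1-2z)$ and $F(1,1,2,z)=-z^{-1}\ln(1-z)$ to obtain the contribution $\ln P_{q}\bigl(\tfrac{1-\eta r_{0}^{2}}{1+\eta r_{0}^{2}}\bigr)-q(q+1)\ln(1-z)$, which already contains the Legendre polynomial appearing in \eqref{specialdetform}. For $n\neq 0$, setting $\beta_{k}=q-\omega_{q,k}^{\eta}(r_{0})$, I would split the summand as
\begin{equation*}
\ln f_{n}^{\eta}-E_{n}=\sum_{k=1}^{q}\!\Bigl[\ln(n+\beta_{k})-\ln(n+k)-\tfrac{\beta_{k}-k}{n}\Bigr]+\Bigl[\sum_{k=1}^{q}\tfrac{\beta_{k}-k}{n}-E_{n}\Bigr],
\end{equation*}
where $E_{n}=-\tfrac{q(q+1)z}{n+1}F(1,1,n+2,z)$. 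The first bracket is summed termwise using the Weierstrass product relation $\sum_{n\geq 1}[\ln(n+x)-\ln n-x/n]=-\ln\Gamma(1+x)-\gamma x$, yielding $\sum_{k}[\ln k!-\ln\Gamma(1+q-\omega_{q,k}^{\eta})]-\gamma\sum_{k}(\beta_{k}-k)$. For the second bracket, I would insert the integral representation $F(1,1,n+2,z)/(n+1)=\int_{0}^{1}(1-t)^{n}(1-zt)^{-1}\,dt$ and use $\sum_{n\geq 1}(1-t)^{n-1}=1/t$ to reduce the $n$-sum to the elementary integral $\int_{0}^{1}(1-zt)^{-1}dt=-z^{-1}\ln(1-z)$, giving $q(q+1)(1-z)\ln(1-z)$ after a short rearrangement.

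The key technical input --- and the main obstacle --- is the coefficient identity $\sum_{k=1}^{q}\omega_{q,k}^{\eta}(r_{0})=q(q-1)/2+q(q+1)z$, equivalently $\sum_{k}(\beta_{k}-k)=-q(q+1)z$. This is precisely what ensures that the $1/n$ counterterms inside the second bracket above cancel in the infinite sum, matching the leading $1/n$ behaviour of $E_{n}$. I would establish it by matching the coefficients of $x^{q}$ and $x^{q-1}$ in the two representations of $M_{q}(x;-2q,-\eta r_{0}^{2})$ in \eqref{Meixnerid}: the hypergeometric side contributes both from the $j=q$ and $j=q-1$ terms of its series, and reorganising these gives the stated identity.

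With this identity in hand, the assembly is essentially bookkeeping. Noting $\eta A^{\eta}/(2\pi)=2z$ and $A^{\eta}M^{2}/(2\pi)=-2q(q+1)z$, the $\gamma$-dependent piece of the prefactor of \eqref{deteta} is $-2\gamma q(q+1)z$, which cancels exactly against the $-\gamma\sum_{k}(\beta_{k}-k)=\gamma q(q+1)z$ contribution from the first bracket once it is doubled by $2\sum_{n\geq 1}$. The combined $\ln(1-z)$ coefficient is $-q(q+1)+2q(q+1)(1-z)=q(q+1)(1-2z)$; substituting $1-2z=(1-\eta r_{0}^{2})/(1+\eta r_{0}^{2})$ and $\ln(1-z)=-\ln(1+\eta r_{0}^{2})$ reproduces the $-q(q+1)\tfrac{1-\eta r_{0}^{2}}{1+\eta r_{0}^{2}}\ln(1+\eta r_{0}^{2})$ term of \eqref{specialdetform}. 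The $\gamma$-independent constants from the prefactor regroup into $-\tfrac{\eta q(q+1)}{2\pi}(\ln r_{0}-1)A^{\eta}$, and the Legendre polynomial and the Gamma-function product $-2\sum_{k}\ln[\Gamma(1+q-\omega_{q,k}^{\eta})/k!]$ fall into place directly, completing the proof.
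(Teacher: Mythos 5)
Your proof is correct and reaches the result through the same overall architecture as the paper --- isolate the $n=0$ term via $F(-q,q+1,1,z_\eta)=P_q(1-2z_\eta)$, factor $f_n^\eta$ through the roots $\omega_{q,k}^{\eta}$ of the Meixner polynomial, and rely on the root-sum identity \eqref{sumroots} --- but your evaluation of the infinite sum over $n\geq 1$ is genuinely different. The paper truncates at $n=N$ and treats the two pieces separately: the exponent sum $\Sigma_N^{(1)}$ via the finite-sum identity \eqref{Fidn} plus a three-case estimate of a remainder $R_N$, and $\Sigma_N^{(2)}$ as an exact telescoping ratio of $\Gamma$'s followed by Stirling; each piece diverges like $\pm q(q+1)z_\eta\ln N$ and the divergences cancel only in the combination. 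You instead insert and subtract the counterterm $\sum_k(\beta_k-k)/n=-q(q+1)z_\eta/n$ so that each of your two brackets is separately absolutely summable, then evaluate the first in closed form from the Weierstrass product for $1/\Gamma(1+x)$ (which delivers both the $\gamma$ terms and the factors $\Gamma(1+q-\omega_{q,k}^{\eta})/k!$ directly, with no Stirling expansion) and the second from the Euler representation $F(1,1,n+2,z)/(n+1)=\int_0^1(1-t)^n(1-zt)^{-1}\,\d t$ together with a geometric sum under the integral, landing on $q(q+1)(1-z_\eta)\ln(1-z_\eta)$ with no asymptotic remainder analysis at all. This buys a cleaner, purely exact computation; the two small points you should make explicit are the sum--integral interchange (the partial sums under the integral equal $\frac{z_\eta-1}{1-z_\eta t}\bigl(1-(1-t)^N\bigr)$, uniformly bounded, so dominated convergence applies) and the fact that the $\omega_{q,k}^{\eta}$ may be complex, so $\ln(n+\beta_k)$ and $\ln\Gamma(1+\beta_k)$ must be read through conjugate pairs --- the same caveat is implicit in the paper's Lemma \ref{lemmatwo}. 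Your final bookkeeping of the $\gamma$, $\ln(1-z_\eta)$ and prefactor terms checks out against \eqref{specialdetform}.
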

For $q=1$, one can check straightforwardly that we find the result announced in Eq.\ \eqref{detspecial}. Note that, for negative curvature, the Legendre polynomial and the product of the $\Gamma$ functions in the argument of the logarithm are always strictly positive real numbers, ensuring that $\ln D^{-}$ is always real. This is actually more generally valid for any positive value of $M^2$ due to the positivity of the Laplacian $\Delta_-$  (see appendix \ref{AppA}). For $M^2=M_{-,q}^2$ we will  make this explicit below. In positive curvature, the determinant may be a negative real number. This is related to the fact that the operator whose determinant we compute may then have negative eigenvalues\footnote{Note that $M_{+,q}^2=-q(q+1)<0$. So, although the Laplacian $\Delta_+$ is a positive operator (see appendix \ref{AppA}), the operator $(\Delta_++M_{+,q}^2)$ need not be positive.}, a case which is dealt with as mentioned at the end of Section \ref{SLoneSec}.

\begin{proof} In order to simplify some formulas, we introduce the notations
\be\label{zetattdef} z_{\eta} = \frac{\eta A^{\eta}}{4\pi} = \frac{\eta r_{0}^{2}}{1+\eta r_{0}^{2}}\, \cvp\quad \tilde z_{\eta} = \frac{z_{\eta}-1}{z_{\eta}} = -\frac{\eta}{r_{0}^{2}}\ee
for $\eta=\pm$. For later reference, we note that $z_{+}\in [0,1[$, or equivalently $\tilde z_{+}<0$; and that $z_{-}\leq 0$, or equivalently $\tilde z_{-}>1$.

Let us start by evaluating the term $n=0$ in the infinite sum over $n\in\mathbb Z$ in Eq.\ \eqref{deteta}. We have
\be\label{fnoteval} f_{0}^{\eta}(1;-M^{2}_{\eta,q}) = F(-q,q+1,1,z_{\eta}) = \frac{z_{\eta}^{q}}{q!}M_{q}\Bigl(q;-2q,\frac{z_{\eta}}{z_{\eta}-1}\Bigr)=P_{q}(1-2z_{\eta})\, ,\ee
where $P_{q}$ is the Legendre polynomial of degree $q$. Note that this is always strictly positive in negative curvature, as can be checked straightforwardly from the expansion of the hypergeometric function and the fact that $z_{-}\leq 0$. This produces the term $\ln P_{q}$ in Eq.\ \eqref{specialdetform}. Moreover, 
\be\label{Fid1} F(1,1,2,z) = -\frac{\ln (1-z)}{z}\, \cdotp\ee
The $n=0$ term is thus
\be\label{nnotterm} \ln P_{q}\Bigl(\frac{1-\eta r_{0}^{2}}{1+\eta r_{0}^{2}}\Bigr) + q(q+1)\ln (1+\eta r_{0}^{2}) = \ln P_{q}(1-2 z_{\eta}) - q(q+1)\ln (1-z_{\eta})\, .\ee

Using this, noting that the contribution of the modes is symmetric under $n\rightarrow -n$ and comparing Eqs.\ \eqref{deteta} and \eqref{specialdetform}, we see that the result we want to prove is equivalent to the identity
\begin{multline}\label{toprove1} \sum_{n\geq 1}\ln\biggl[f_{n}^{\eta}(1;-M^{2}_{\eta,q})
e^{-\frac{A^{\eta}M^{2}_{\eta,q}}{4\pi(n+1)}F(1,1,n+2,\frac{\eta A^{\eta}}{4\pi})}\biggr] =\\
q(q+1)\bigl(\gamma z_{\eta}+(1-z_{\eta}) \ln (1-z_{\eta})\bigr) -\ln\prod_{k=1}^{q}\frac{\Gamma\bigl(1+q-\omega_{q,k}^{\eta}(r_{0})\bigr)}{k!}\, \cdotp
\end{multline}
Our strategy to evaluate the sum on the left-hand side of this equation is to write it as the limit of finite sums,
\be\label{stra1} \sum_{n\geq 1}\ln\biggl[f_{n}^{\eta}(1;-M^{2}_{\eta,q})
e^{-\frac{A^{\eta}M^{2}_{\eta,q}}{4\pi(n+1)}F(1,1,n+2,\frac{\eta A^{\eta}}{4\pi})}\biggr] =\lim_{N\rightarrow\infty} \Bigl(q(q+1)\Sigma_{N}^{(1)} + \Sigma_{N}^{(2)}\Bigr)\, ,\ee
where, using, in particular, Eq.\ \eqref{zetattdef}, 
\begin{align}\label{Sigsum1} \Sigma_{N}^{(1)} &= \sum_{n=1}^{N}\frac{z_{\eta}}{n+1}F(1,1,n+2,z_{\eta})
\\
\label{Sigsum2}\Sigma_{N}^{(2)} &=\sum_{n=1}^{N}\ln f_{n}^{\eta}(1;-M^{2}_{\eta,q})\, .
\end{align}
\begin{lemma}\label{lemmaone} We have, when $N\rightarrow\infty$,
\be\label{SigNoneform} \Sigma_{N}^{(1)} = z_{\eta}\ln N + \gamma z_{\eta} +(1-z_{\eta})\ln (1-z_{\eta}) + O(1/N)\, .\ee
\end{lemma}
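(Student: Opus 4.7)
The plan is to convert the sum into an integral using Euler's representation of the hypergeometric function, then extract the leading large-$N$ behavior by splitting the resulting integral via partial fractions. Concretely, Euler's integral gives
\be\label{planeq1}
\frac{1}{n+1}\,F(1,1,n+2,z_{\eta}) = \int_{0}^{1}(1-t)^{n}(1-z_{\eta}t)^{-1}\,\d t,
\ee
which is valid since $z_{\eta}<1$ in both signs of the curvature. Multiplying by $z_{\eta}$, summing $n=1,\ldots,N$, and using the geometric series $\sum_{n=1}^{N}(1-t)^{n}=[(1-t)-(1-t)^{N+1}]/t$, I would write
\be
\Sigma_{N}^{(1)} = z_{\eta}\int_{0}^{1}\frac{(1-t)-(1-t)^{N+1}}{t(1-z_{\eta}t)}\,\d t.
\ee
Note that the numerator $(1-t)-(1-t)^{N+1}$ vanishes at $t=0$, so the integral is finite despite the apparent $1/t$ pole.

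Next, I would split the rational factor by partial fractions,
\be
\frac{1}{t(1-z_{\eta}t)}=\frac{1}{t}+\frac{z_{\eta}}{1-z_{\eta}t},
\ee
separating $\Sigma_{N}^{(1)}$ into two pieces. The first piece is elementary: after the substitution $u=1-t$, it becomes $z_{\eta}\sum_{k=0}^{N-1}\int_{0}^{1}u^{k+1}\,\d u=z_{\eta}(H_{N+1}-1)$, which produces $z_{\eta}\ln N+\gamma z_{\eta}-z_{\eta}+O(1/N)$ via the standard asymptotics of harmonic numbers. For the second piece, I would argue that $\int_{0}^{1}(1-t)^{N+1}(1-z_{\eta}t)^{-1}\d t=O(1/N)$ since $(1-z_{\eta}t)^{-1}$ is uniformly bounded on $[0,1]$, and then evaluate the remaining $N$-independent integral in closed form by another partial-fraction identity
\be
\frac{1-t}{1-z_{\eta}t}=\frac{1}{z_{\eta}}+\frac{z_{\eta}-1}{z_{\eta}(1-z_{\eta}t)},
\ee
obtaining $z_{\eta}^{2}\int_{0}^{1}\frac{1-t}{1-z_{\eta}t}\,\d t=z_{\eta}+(1-z_{\eta})\ln(1-z_{\eta})$.

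Summing the two pieces and canceling the $\pm z_{\eta}$ constants yields exactly the claimed asymptotic expansion
\be
\Sigma_{N}^{(1)}=z_{\eta}\ln N+\gamma z_{\eta}+(1-z_{\eta})\ln(1-z_{\eta})+O(1/N).
\ee
The main technical care lies in the first step: one must keep the two terms of the geometric sum together before applying partial fractions, because splitting prematurely produces individually divergent integrals at $t=0$. Once that is handled, all remaining operations are straightforward elementary calculus, and the result holds uniformly for both signs of $\eta$ since $z_{+}\in[0,1)$ and $z_{-}\leq 0$ both keep $1-z_{\eta}t$ bounded away from zero on the integration interval.
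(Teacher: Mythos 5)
Your proof is correct, and it takes a genuinely different route from the one in the paper. The paper starts from the finite-sum identity $F(1,1,n+2,z)=\frac{1+n}{z}\bigl[\sum_{s=1}^{n}\frac{1}{s}\tilde z_{\eta}^{\,n-s}-\tilde z_{\eta}^{\,n}\ln(1-z)\bigr]$ with $\tilde z_{\eta}=(z_{\eta}-1)/z_{\eta}$, reorders the resulting double sum to extract $z_{\eta}H_{N}+(1-z_{\eta})\ln(1-z_{\eta})$, and is then left with a remainder $R_{N}=\tilde z_{\eta}^{N}\bigl(\sum_{s=1}^{N}\tilde z_{\eta}^{-s}/s-\ln(1-z_{\eta})\bigr)$ whose $O(1/N)$ decay must be established by a three-way case analysis on $|\tilde z_{\eta}|$ (greater than, equal to, or less than one), the last case requiring a separate saddle-type estimate of $z^{N}\sum_{s\leq N}z^{-s}/s$. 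You instead trade the series identity for Euler's integral representation $\frac{1}{n+1}F(1,1,n+2,z_{\eta})=\int_{0}^{1}(1-t)^{n}(1-z_{\eta}t)^{-1}\,\d t$, sum the geometric series under the integral, and split by partial fractions; all your steps check out (the representation is valid since $z_{\eta}<1$ in both curvatures, the numerator $(1-t)-(1-t)^{N+1}$ vanishes at $t=0$ so the split against $1/t$ is legitimate, the first piece gives exactly $z_{\eta}(H_{N+1}-1)$, and the tail $\int_{0}^{1}(1-t)^{N+1}(1-z_{\eta}t)^{-1}\d t\leq\max(1,(1-z_{\eta})^{-1})/(N+2)$). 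What your approach buys is uniformity: the bound $|1-z_{\eta}t|^{-1}\leq\max(1,(1-z_{\eta})^{-1})$ on $[0,1]$ holds for every admissible $z_{\eta}$ at once, so the entire case analysis of the paper's proof disappears, and the constant term $z_{\eta}+(1-z_{\eta})\ln(1-z_{\eta})$ emerges from a single elementary integral rather than from a resummation. The only degenerate point is $z_{\eta}=0$ (i.e.\ $r_{0}=0$), where your intermediate partial-fraction identity for $(1-t)/(1-z_{\eta}t)$ has a spurious $1/z_{\eta}$; since it is multiplied by $z_{\eta}^{2}$ and the lemma is trivially $0=0$ there, this is cosmetic.
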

\begin{proof}
To prove the lemma, we start by noting that, for all $z<1$, 
\be\label{Fidn} F(1,1,n+2,z)=\frac{1+n}{z}\Biggl[\sum_{s=1}^{n}\frac{1}{s}\Big(\frac{z-1}{z}\Big)^{n-s}-\Big(\frac{z-1}{z}\Big)^{n}\ln(1-z)\Biggr],\ee
generalizing \eqref{Fid1}. We can use this formula for our purposes, because the variable  $z_{\eta}$ defined in Eq.\ \eqref{zetattdef} is always strictly less than one. Performing explicitly the trivial geometric sum, we get
\be\label{Sigo1} \Sigma_{N}^{(1)} = \sum_{n=1}^{N}\sum_{s=1}^{n}\frac{\tilde z_{\eta}^{n-s}}{s} +(1-\tilde z_{\eta}^{N})(1-z_{\eta})\ln (1-z_{\eta})\, .\ee
By reordering the terms and evaluating one more geometric sum, the first sum in the above equation can be written as
\be\label{Sif01fir} \sum_{n=1}^{N}\sum_{s=1}^{n}\frac{\tilde z_{\eta}^{n-s}}{s} = 
\sum_{s=1}^{N}\frac{1}{s}\sum_{p=0}^{N-s}\tilde z_{\eta}^{p} = z_{\eta} H_{N} +(1- z_{\eta})\tilde z_{\eta}^{N}\sum_{s=1}^{N}\frac{\tilde z_{\eta}^{-s}}{s}\,\cvp\ee
where the
\be\label{HNdef} H_{N} = \sum_{s=1}^{N}\frac{1}{s} = \ln N + \gamma + O(1/N)\ee
are the harmonic numbers. Putting Eqs.\ \eqref{Sigo1} and \eqref{Sif01fir} together yields
\be\label{Sigo2} \Sigma_{N}^{(1)} = z_{\eta}H_{N} + (1-z_{\eta})\ln (1-z_{\eta}) + (1-z_{\eta})R_{N}\ee
for a remainder
\be\label{remainder} R_{N} = \tilde z_{\eta}^{N}\biggl(\sum_{s=1}^{N}\frac{\tilde z_{\eta}^{-s}}{s} - \ln(1-z_{\eta})\biggr)\, .\ee
Taking into account the form of the large $N$ expansion of the harmonic numbers given in Eq.\ \eqref{HNdef}, the equation \eqref{SigNoneform} we want to prove is equivalent to the statement that
\be\label{RNexp} R_{N} = O\bigl(1/N\bigr)\ee
at large $N$. To prove this asymptotic behaviour, we distinguish three cases.

\paragraph{Case one: $|\tilde z_{\eta}|>1$.} Note that this condition is always true in negative curvature, but not necessarily true in positive curvature. When this condition is satisfied, we can use the convergent series expansion of $\ln(1-z_{\eta})=-\ln(1-1/\tilde z_{\eta})$ to write
\begin{multline}\label{boundone}|R_{N}| =
\Biggl|\tilde z_{\eta}^{N}\biggl(\sum_{s=1}^{N}\frac{\tilde z_{\eta}^{-s}}{s} + \ln (1-1/\tilde z_{\eta})\biggr)\Biggr| =
\Biggl|\sum_{s=N+1}^{\infty}\frac{\tilde z_{\eta}^{N-s}}{s}\Biggr|\\
\leq\frac{1}{N+1}\sum_{s=0}^{\infty}|\tilde z_{\eta}|^{-s-1} = \frac{1}{|\tilde z_{\eta}|-1}\frac{1}{N+1}\,\cvp
\end{multline}
which allows us to conclude.

\paragraph{Case two: $\tilde z_{\eta}=-1$.} Note that this can only occur in positive curvature. The reasoning used above when $|\tilde z_{\eta}|>1$ must be slightly refined, replacing the inequality in Eq.\ \eqref{boundone} by
\be\label{boundtwo} |R_{N}|=\Biggl|\sum_{s=1}^{N}\frac{(-1)^{s}}{s} + \ln 2\Biggr| =
\Biggl|\sum_{s=N+1}^{\infty}\frac{(-1)^{s}}{s}\Biggr| \underset{N\rightarrow\infty}{\sim}\frac{1}{2N}\,\cvp\ee
the large $N$ estimate being obtained by using the standard trick of grouping the terms of the alternating series two by two.

\paragraph{Case three: $|\tilde z_{\eta}|<1$.} Note that this can only occur in positive curvature. The term in $\tilde z_{\eta}^{N}\ln(1-z_{\eta})$ in $R_{N}$ is then exponentially small when $N\rightarrow\infty$. The result we want to prove will then follow from the estimate
\be\label{smallsigdef} \sigma_{N}(z) = z^{N}\sum_{s=1}^{N}\frac{z^{-s}}{s} \underset{N\rightarrow\infty}{\sim}\frac{1}{1-z}\frac{1}{N}\quad\text{when $|z|<1$.}\ee
A nice way to prove this is to use the following integral representation for $\sigma_{N}(z)$,
\be\label{sigNint} \sigma_{N}(z) =  z^{N}\int_{0}^{1/z}\frac{u^{N}-1}{u-1}\, \d u\, .\ee
For $|z|<1$, only the region very near the upper bound in the integration range can contribute at large $N$. Setting $u=\frac{1}{z}(1-v/N)$ and taking the large $N$ limit then yields
\be\label{sigNasy} \sigma_{N}(z)\underset{N\rightarrow\infty}{\sim} \frac{1}{z N}\int_{0}^{\infty}\frac{e^{-v}\d v}{1/z -1} = \frac{1}{1-z}\frac{1}{N}\,\cdotp\ee
\end{proof}

\begin{lemma}\label{lemmatwo} We have, when $N\rightarrow\infty$,
\be\label{SigNoneform22} \Sigma_{N}^{(2)} = -q(q+1)z_{\eta}\ln N  - \ln\prod_{k=1}^{q}\frac{\Gamma\bigl(1+q-\omega_{q,k}^{\eta}(r_{0})\bigr)}{k!}
+ O(1/N)\, .\ee
\end{lemma}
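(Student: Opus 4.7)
The plan is to rewrite the product $\prod_{n=1}^{N} f_{n}^{\eta}(1;-M^{2}_{\eta,q})$ as a finite product of $\Gamma$ functions by means of the Meixner identity already recorded in \eqref{Meixnerid}, and then apply Stirling's formula to extract the $\ln N$ growth and the constant term.

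First I would observe that when $M^{2}=M^{2}_{\eta,q}=-\eta q(q+1)$ the square roots in \eqref{eigenminus}--\eqref{eigenplus} collapse to $\pm(2q+1)$ or $\mp i(2q+1)$, so that in both curvature signs
$$f_{n}^{\eta}(1;-M^{2}_{\eta,q}) = F(-q,q+1,|n|+1,z_{\eta}),$$
with $z_{\eta}$ as in \eqref{zetattdef}. Substituting $x = n+q$ in the Meixner identity of \eqref{Meixnerid} then yields the key factorization
$$(n+1)_{q}\, F(-q,q+1,n+1,z_{\eta}) = \prod_{k=1}^{q}\bigl(n+q-\omega_{q,k}^{\eta}(r_{0})\bigr).$$

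Next, taking logarithms and summing $n=1,\ldots,N$ reduces $\Sigma_{N}^{(2)}$ to a combination of elementary sums of the form $\sum_{n=1}^{N}\ln(n+\alpha) = \ln\Gamma(N+1+\alpha)-\ln\Gamma(1+\alpha)$. I would then apply Stirling in the uniform form
$$\ln\Gamma(N+1+\alpha) = (N+\alpha+\tfrac{1}{2})\ln N - N + \tfrac{1}{2}\ln(2\pi) + O(1/N)$$
and pair each of the $q$ numerator $\Gamma$'s (coming from the Meixner roots) with one of the $q$ denominator $\Gamma$'s (coming from $(n+1)_{q}=\prod_{j=1}^{q}(n+j)$). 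The $N\ln N$ and $-N$ pieces cancel in each pair, leaving a $\ln N$ coefficient equal to $q(q-1)/2 - \sum_{k}\omega_{q,k}^{\eta}$ and the advertised constant $-\ln \prod_{k}\Gamma(1+q-\omega_{q,k}^{\eta})/k!$.

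To close the argument I would compute $\sum_{k}\omega_{q,k}^{\eta}$ by Vieta's formula applied to $M_{q}(x;-2q,-\eta r_{0}^{2}) = z_{\eta}^{-q}\prod_{k}(x-\omega_{q,k}^{\eta})$, extracting the coefficients of $x^{q}$ and $x^{q-1}$ from the explicit finite hypergeometric series $(-1)^{q}\frac{(2q)!}{q!}\,F(-q,-x,-2q,1/z_{\eta})$. A short calculation gives $\sum_{k}\omega_{q,k}^{\eta} = q(q-1)/2 + q(q+1)z_{\eta}$, which collapses the $\ln N$ coefficient to $-q(q+1)z_{\eta}$, as required. The main obstacle, as in Lemma \ref{lemmaone}, is purely bookkeeping: identifying the correct shift $x=n+q$ so that the zeros of the hypergeometric match the Meixner roots, and pairing the $\Gamma$ functions so that the dangerous $N\ln N$ and $N$ contributions cancel exactly. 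Once these are handled, the Vieta check with the coefficient computation is routine.
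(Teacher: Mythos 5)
Your proposal is correct and follows essentially the same route as the paper: the Meixner identity \eqref{Meixnerid} with $x=n+q$ gives the factorization $f_n^{\eta}(1;-M^2_{\eta,q})=\frac{n!}{(n+q)!}\prod_{k=1}^q(n+q-\omega_{q,k}^{\eta})$, the sum of logarithms telescopes into ratios of $\Gamma$ functions, and Stirling together with $\sum_k\omega_{q,k}^{\eta}=q(q+1)z_{\eta}+\tfrac12 q(q-1)$ (the paper's Eq.\ \eqref{sumroots}) yields the stated $\ln N$ coefficient and constant. The only cosmetic difference is that the paper first records the exact finite-$N$ identity \eqref{SigNoneform2} before expanding, whereas you expand each $\ln\Gamma$ pair directly.
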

\begin{proof}
We are actually going to prove the identity
\be\label{SigNoneform2} \Sigma_{N}^{(2)} = \ln\prod_{k=1}^{q}\frac{k!\,\Gamma\bigl(N+1+q-\omega_{q,k}^{\eta}(r_{0})\bigr)}{(N+k)!\,\Gamma\bigl(1+q-\omega_{q,k}^{\eta}(r_{0})\bigr)}\, \cvp\ee
which is valid at finite $N$. Eq.\ \eqref{SigNoneform22} follows straightforwardly by using Stirling's formula
\be\label{Stirlingf} \ln\Gamma (N+1+a) = N\ln N - N +\Bigl(a+\frac{1}{2}\Bigr)\ln N + \frac{1}{2}\ln (2\pi) + O(1/N)\ee
and 
\be\label{sumroots} \sum_{k=1}^{q}\omega_{q,k}^{\eta} = q(q+1) z_{\eta} + \frac{1}{2}q(q-1)\, .\ee
This sum of the roots of the Meixner polynomial can be derived easily from Eq.\ \eqref{Meixnerid} and the series representation of the hypergeometric functions.

To prove Eq.\ \eqref{SigNoneform2}, we start by noting that
\begin{multline}\label{fneval}f_{n}^{\eta}(1;-M^{2}_{\eta,q}) = F(-q,q+1,n+1,z_{\eta}) \\= \frac{n!}{(n+q)!} z_{\eta}^{q}M_{q}\Bigl(n+q;-2q,\frac{z_{\eta}}{z_{\eta}-1}\Bigr)=
\frac{n! q!}{(n+q)!}P_{q}^{(n,-n)}(1-2z_{\eta})=\\
\frac{n!}{(n+q)!}\prod_{k=1}^{q}\bigl(n+q-\omega^{\eta}_{q,k}(r_{0})\bigr)\, ,
\end{multline}
where $P_{q}^{(n,-n)}$ is the Jacobi polynomial, generalizing \eqref{fnoteval}, and we have used the last equation in \eqref{Meixnerid} to derive the last equality. This last equality is actually all we need, but we have indicated the expression in terms of the Jacobi polynomials for completeness. Note that this is always strictly positive in negative curvature, as can be checked straightforwardly from the expansion of the hypergeometric function in the first line of the above equation and the fact that $z_{-}\leq 0$. We thus get, from the definition \eqref{Sigsum2},
\be\label{Sig2calc}\begin{split} \Sigma_{N}^{(2)}& =\sum_{n=1}^{N}\biggl[ 
\ln\frac{n!}{(n+q)!} + \ln\prod_{k=1}^{q}\bigl(n+q-\omega^{\eta}_{q,k}(r_{0})\bigr)\biggr]\\
& = \sum_{n=1}^{N}\ln\prod_{k=1}^{q}\frac{n+q-\omega^{\eta}_{q,k}(r_{0})}{n+k}= \ln\prod_{k=1}^{q}\prod_{n=1}^{N}\frac{n+q-\omega^{\eta}_{q,k}(r_{0})}{n+k}\, \cdotp
\end{split}\ee
Using $\Gamma(z+1) = z\Gamma(z)$, one can straightforwardly check that this is the same as Eq.\ \eqref{SigNoneform2}.
\end{proof}
Combining Eqs.\ \eqref{stra1}, \eqref{SigNoneform} and \eqref{SigNoneform22}, we get Eq.\ \eqref{toprove1}.
\end{proof}

\subsection{\label{specialSec2}Application: the large $\ell$ limit in negative curvature}

As a simple application of the explicit formula \eqref{specialdetform} for the determinant, let us study the large $\ell$ limit in negative curvature. As explained in Section \ref{infSec}, this limit is notoriously subtle. The logarithm $\ln D^{-}_{\infty}$ of the strictly infinite area determinant, per unit area, is given exactly, for any mass $M$, by Eq.\ \eqref{dethyperinf}. However, the infinite area limit of the finite area determinants does not coincide with the strictly infinite area result,
\be\label{infnotinf} \lim_{\ell\rightarrow\infty}\frac{1}{\ell}\ln D^{-}\not = D^{-}_{\infty}\, .\ee
This was explicitly checked when the mass is zero at the end of Section \ref{zeromassSec}. For masses of the form $M^{2} = q(q+1)$, $q\in\mathbb N$, the goal of the present subsection is to show the following.
\begin{proposition}
Let $a_{1},\ldots,a_{q}$ be the roots of the Laguerre polynomial $L_{q}^{(-2q -1)}$. Then the large $\ell$ asymptotics of the determinants computed for the special masses $M^{2}=q(q+1)$, $q\in\mathbb N$, is given by 
\be\label{largelfinal} \ln D^{-}\bigl(M^{2}=q(q+1),r_{0}(\ell)\bigr) = \biggl(-\frac{2}{3}+\sum_{k=1}^{q}a_{k}\ln (-a_{k})\biggr)\frac{\ell}{2\pi} + O(1)\, .\ee
\end{proposition}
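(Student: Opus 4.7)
The plan is to extract the large-$\ell$ asymptotics directly from \eqref{specialdetform} specialized to $\eta=-1$ and $M^{2}=q(q+1)$. Parametrize the large-$\ell$ limit via $r_{0}\to 1^{-}$: the relation $\ell=4\pi r_{0}/(1-r_{0}^{2})$ gives $1-r_{0}^{2}=4\pi r_{0}/\ell$, $r_{0}=1-2\pi/\ell+O(1/\ell^{2})$, $A^{-}(r_{0})=\ell r_{0}=\ell+O(1)$, $\ln r_{0}=O(1/\ell)$, and $(1+r_{0}^{2})/(1-r_{0}^{2})=\ell/(2\pi)+O(1)$. Taylor-expanding the purely elementary contributions in \eqref{specialdetform} -- the two area-type terms and $-q(q+1)\frac{1+r_{0}^{2}}{1-r_{0}^{2}}\ln(1-r_{0}^{2})$ -- immediately produces an explicit linear-in-$\ell$ piece together with an $\ell\ln\ell$ piece equal to $+\frac{q(q+1)}{2\pi}\ell\ln\ell$. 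The Legendre term reduces to $q\ln\ell+O(1)$ since $P_{q}(u)\sim\frac{(2q)!}{2^{q}(q!)^{2}}u^{q}$ as $u\to\infty$.

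The nontrivial step is the asymptotic analysis of the Meixner roots $\omega_{q,k}^{-}(r_{0})$. Setting $t=1-1/r_{0}^{2}\to 0^{-}$, Eq.\ \eqref{Meixnerdef} gives $M_{q}(x;-2q,r_{0}^{2})\propto F(-q,-x,-2q,t)$. Rescale $x=y/t$; since $(-x)_{j}t^{j}=(-y)^{j}+O(t)$ for each $j\leq q$, the hypergeometric series reduces in the limit $t\to 0^{-}$ to
\begin{equation*}
\sum_{j=0}^{q}\frac{(-q)_{j}}{(-2q)_{j}\,j!}(-y)^{j},
\end{equation*}
which is the Kummer confluent hypergeometric representation of $L_{q}^{(-2q-1)}(-y)$, up to a nonzero multiplicative constant. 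By continuity of the roots of a family of polynomials of fixed degree, $\omega_{q,k}^{-}(r_{0})\,t\to -a_{k}$ as $r_{0}\to 1^{-}$, where $a_{1},\ldots,a_{q}$ are the roots of $L_{q}^{(-2q-1)}$. Inserting $t=-4\pi/(r_{0}\ell)$ yields $\omega_{q,k}^{-}(r_{0})=\frac{a_{k}\ell}{4\pi}+d_{k}+o(1)$ with $d_{k}=O(1)$.

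The arguments $z_{k}:=1+q-\omega_{q,k}^{-}(r_{0})\sim -a_{k}\ell/(4\pi)$ stay bounded away from the poles of $\Gamma$ for $\ell$ large (the first few $q$ show that $a_{k}$ is either real negative or comes in complex conjugate pairs off the positive real axis), so Stirling's formula gives
\begin{equation*}
-2\ln\Gamma(z_{k})=\frac{a_{k}\ell\ln\ell}{2\pi}+\frac{a_{k}\ell}{2\pi}\bigl[\ln(-a_{k})-\ln(4\pi)-1\bigr]+O(\ln\ell).
\end{equation*}
Summing over $k$ and using Vieta's formula $\sum_{k}a_{k}=-q(q+1)$ (read off from the subleading coefficient of $L_{q}^{(-2q-1)}$ and independently verifiable using \eqref{sumroots}), the $\ell\ln\ell$ contribution $-\frac{q(q+1)}{2\pi}\ell\ln\ell$ cancels the one produced by the elementary terms. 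The surviving linear-in-$\ell$ coefficient reorganizes into $\frac{1}{2\pi}\bigl(-\frac{2}{3}+\sum_{k}a_{k}\ln(-a_{k})\bigr)$, which is \eqref{largelfinal}.

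The main obstacle is the Meixner-to-Laguerre confluence combined with continuity of roots, especially as the $a_{k}$ can be complex in general (e.g.\ $-3\pm i\sqrt{3}$ for $q=2$); a consistent branch of $\ln(-a_{k})$ must be chosen, with reality of the final sum ensured by complex-conjugate pairing. A secondary delicate point, required to upgrade the remainder from $O(\ln\ell)$ to the claimed $O(1)$, is the cancellation of $\ln\ell$ contributions originating from $\ln P_{q}$, from the Stirling correction $-\tfrac{1}{2}\ln z_{k}$, and from the subleading pieces $d_{k}$. Evaluating $\sum_{k}\omega_{q,k}^{-}$ to subleading order via \eqref{sumroots} pins down $\sum_{k}d_{k}=q^{2}$, which is exactly what is required for the residual $\ln\ell$ terms to kill each other.
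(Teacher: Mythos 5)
Your proposal is correct and follows essentially the same route as the paper: extract the elementary asymptotics from \eqref{specialdetform}, identify the rescaled Meixner roots with the Laguerre roots $a_k$ via the confluence of the hypergeometric representation \eqref{Meixnerid}, and apply Stirling together with the trace identity \eqref{sumroots} (which gives $\sum_k a_k=-q(q+1)$ and, at subleading order, exactly the $\sum_k d_k=q^2$ you need). Your explicit verification that the $\ell\ln\ell$ and $\ln\ell$ contributions cancel is the same bookkeeping the paper performs, just spelled out in more detail.
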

This generalizes the leading term in the expansion at zero mass given in Eq.\ \eqref{largelzerom}. Note that the roots $a_{k}$ either have an imaginary part and then come in complex conjugate pairs (because the Laguerre polynomial $L_{q}^{(-2q -1)}$ has real coefficients) or are real and negative (because we know that $\ln D^{-}$ must be real, as stated just below Eq.\ \eqref{specialdetform}). The logarithm of the complex roots in \eqref{largelfinal} is evaluated with the determination $\ln (-a_{k}) = \ln |a_{k}| + i\arg (-a_{k})$, with $-\pi<\arg (-a_{k})<\pi$. For instance,
\be\label{exlargel} \begin{split}\ln D^{-}\bigl(M^{2}=2,r_{0}(\ell)\bigr) &= -\Bigl(\frac{1}{3}+\ln 2\Bigr)\frac{\ell}{\pi}\,\cvp\\
\ln D^{-}\bigl(M^{2}=6,r_{0}(\ell)\bigr) & =-\Bigl(\frac{2}{3}+3\ln 12-\frac{\pi}{\sqrt{3}}\Bigr)\frac{\ell}{2\pi}\, \cvp\quad\text{etc.}\end{split}\ee
\begin{proof}
Using \eqref{ellminrel} for $\ell^{-}=\ell\rightarrow +\infty$ and the fact that the Legendre polynomial $P_{q}$ is of degree $q$, it is straightforward, starting from \eqref{specialdetform}, to show that
\begin{multline}\label{llexp1} \ln D^{-}\bigl(M^{2}=q(q+1),r_{0}(\ell)\bigr) = \frac{q(q+1)}{2\pi}\ell\ln\frac{\ell}{4\pi} - \Bigl(\frac{1}{3\pi} + \frac{q(q+1)}{2\pi}\Bigr)\ell + q\ln\ell\\ - 2\ln\prod_{k=1}^{q}\Gamma\bigl(1+q-\omega^{-}_{q,k}(r_{0})\bigr) + O(1)\, .
\end{multline}
The non-trivial part of the calculation is to evaluate the asymptotic behaviour of the terms involving the $\Gamma$ functions. To do this, we need to find the asymptotic behaviour of the roots $\omega^{-}_{q,k}(r_{0})$ of the Meixner polynomials $M_{q}(x;-2q,r_{0}^{2})$. By examining the explicit expression of these polynomials given by the second equality in \eqref{Meixnerid}, and noting that, in the limit we are considering,
\be\label{zminusasy} z_{-}=-\frac{r_{0}^{2}}{1-r_{0}^{2}} = -\frac{\ell}{4\pi} + \frac{1}{2} + O\bigl(1/\ell\bigr)\, ,\ee
one finds 
\be\label{rootsexp} \omega_{k,q}^{-}=-a_k z_{-}+b_k+O\bigl(1/z_{-}\bigr) = \frac{a_{k}\ell}{4\pi} + b_{k}-\frac{1}{2}a_{k} + O\bigl(1/\ell\bigr)\ee
where the $a_{k}$ are the roots of the polynomial $$\sum_{s=0}^{q}\frac{(q+s)!}{s!(q-s)!}x^{q-s}\, ,$$ which is proportional to the Laguerre polynomial $L_{q}^{(-2q-1)}$. The $b_{k}$ (as well as the higher order terms in the large $\ell$ expansion) can be straightforwardly expressed in terms of the $a_{k}$, but we won't need the (rather complicated) explicit expressions. It is sufficient to know the sums
\be\label{sumsrasy} \sum_{k=1}^{q}a_{k} = -q(q+1)\, ,\quad \sum_{k=1}^{q}b_{k} = \frac{1}{2}q(q-1)\, ,\ee
which are obtained directly from Eq.\ \eqref{sumroots}. Using Eqs.\ \eqref{rootsexp} and \eqref{sumsrasy} together with the Stirling asymptotic expansion of the $\Gamma$ function, already indicated in Eq.\ \eqref{Stirlingf}, we find Eq.\eqref{largelfinal}.
\end{proof}
It is interesting to note that the terms in $\ell\ln\ell$ and $\ln\ell$ precisely cancel. The cancellation of the $\ell\ln\ell$ terms was actually expected, since it is crucial for the existence of an effective action per unit area, which is a consequence of extensivity.

\subsection*{Acknowledgments}

We would like to dedicate this paper to Steve Zelditch. F.F.\ shared with him illuminating discussions on many subjects, including in relation with a large part of the research presented in this paper. Steve was a magnificent mathematician with a deep interest in physics, a warm and caring human being, and a friend.

\subsection*{Declarations}

\subsubsection*{Funding}

This work is partially supported by the International Solvay Institutes and the Belgian Fonds National de la Recherche Scientifique FNRS (convention IISN 4.4503.15). The work of S.C.\ is supported by a postdoctoral research fellowship of the Belgian F.R.S.-FNRS.

\subsubsection*{Data Availability}

Data sharing not applicable to this article as no datasets were generated or analysed during
the current study.

\subsubsection*{Conflict of interest}

The authors have no relevant financial or non-financial interests to disclose.

\appendix\clearpage

\section{\label{AppA}Monotonicity of the eigenvalues as a function of the boundary length}

In this Appendix, we show that the eigenvalues $\la^{\pm}_{n,k}(r_{0})$ of the Laplacians $\Delta_{\pm}$, defined by Eqs.\ \eqref{Deltadef} and \eqref{conffactor}, are strictly decreasing functions of $r_{0}$. This result was mentioned in the main text in Section \ref{SLoneSec}. By taking the flat space limit, this also implies that the eigenvalues $\la^{0}_{n,k}(\ell)$ of $\Delta_{0}$ are strictly decreasing functions of the boundary length $\ell$.

The operator $\Delta_{\pm}$, with Dirichlet boundary conditions, are symmetric operators with respect to the scalar product
\be\label{scalproddisk} \langle \psi_{1} ,\psi_{2}\rangle = \int_{\disk}\d^{2}x\sqrt{g_{\pm}}\,  \psi_{1}\psi_{2}\, ,\ee
where $g_{\pm} = e^{2\sigma_{\pm}}(\d r^{2} + r^{2}\d\theta^{2})$. It is a positive operator because the Dirichlet boundary condition allows to perform an integration by part to prove that
\be\label{positiDelta} \langle \psi,\Delta_{\pm} \psi\rangle = \int_{\disk}\d^{2}x\sqrt{g_{\pm}}\, g_{\pm}^{\mu\nu}\partial_{\mu} \psi\partial_{\nu}\psi\geq 0\, .\ee
Thus, all the eigenvalues of $\Delta_{\pm}$ are positive. Eq.\ \eqref{positiDelta} actually shows that an eigenvector with zero eigenvalue must be a constant, which is impossible because the Dirichlet boundary condition would enforce this constant to be zero.

Let us denote by $(\psi^{\pm}_{k})_{k\geq 0}$ an orthonormal basis of eigenvectors, with eigenvalues $0<\la^{\pm}_{0}\leq\la^{\pm}_{1}\leq\cdots$ (this is just a rearrangement of the eigenvalues noted $\la_{n,k}^{\pm}$ above). The eigenvalues and the eigenfunctions are functions of the parameter $r_{0}$. It is convenient to rescale the radial coordinate and use $\rho=r_{0}r$ instead of $r$, so that the metrics
\be\label{metApp} g_{\pm} = \frac{4L^{2}}{(1\pm\rho^{2})^{2}}\bigl(\d\rho^{2} + \rho^{2}\d\theta^{2}\bigr)
\ee
do not depend explicitly on $r_{0}$ anymore. All the $r_{0}$ dependence is then in the range of the radial coordinate, $0\leq \rho\leq r_{0}$. We have
\be\label{eigenintform} \la^{\pm}_{k}(r_{0}) = \langle \psi^{\pm}_{k},\Delta_{\pm}\psi^{\pm}_{k}\rangle =\int_{0}
^{r_{0}}\d\rho\, \rho\int_{0}^{2\pi}\d\theta \, \Bigl[\bigl(\partial_{\rho}\psi^{\pm}_{k}\bigr)^{2} + \frac{1}{\rho^{2}}\bigl(\partial_{\theta}\psi^{\pm}_{k}\bigr)^{2}\Bigr]\, .\ee
The variation of $\la^{\pm}_{k}$ with respect to $r_{0}$ is thus
\begin{multline}\label{derlak1}
\frac{\d\la^{\pm}_{k}}{\d r_{0}}  = r_{0}\int_{0}^{2\pi}\d\theta\Bigl[\bigl(\partial_{\rho}\psi^{\pm}_{k}\bigr)^{2} + \frac{1}{\rho^{2}}\bigl(\partial_{\theta}\psi^{\pm}_{k}\bigr)^{2}\Bigr]\Big|_{r=r_0} \\+ \int_{0}^{r_{0}}\d \rho\, \rho\int_{0}^{2\pi}\d\theta \,\frac{\partial}{\partial r_{0}} \Bigl[\bigl(\partial_{\rho}\psi^{\pm}_{k}\bigr)^{2} + \frac{1}{\rho^{2}}\bigl(\partial_{\theta}\psi^{\pm}_{k}\bigr)^{2}\Bigr]\, .
\end{multline}
The second term can be massaged as follows:
\be\label{secndtermtrans}
\begin{split}
\int_{0}^{r_{0}}\d \rho\, \rho\int_{0}^{2\pi}\d\theta \,\frac{\partial}{\partial r_{0}} &\Bigl[\bigl(\partial_{\rho}\psi^{\pm}_{k}\bigr)^{2} + \frac{1}{\rho^{2}}\bigl(\partial_{\theta}\psi^{\pm}_{k}\bigr)^{2}\Bigr] = \int_{\disk}\d^{2}x\sqrt{g_{\pm}}\frac{\partial}{\partial r_{0}}\bigl(g_{\pm}^{\mu\nu}\partial_{\mu}\psi^{\pm}_{k}\partial_{\nu}\psi^{\pm}_{k}\bigr)\\
&=2\int_{\disk}\d^{2}x\sqrt{g_{\pm}}\, g_{\pm}^{\mu\nu}\partial_{\mu}\psi^{\pm}_{k}\partial_{\nu}\frac{\partial \psi^{\pm}_{k}}{\partial r_{0}}
 =2 \int_{\disk}\d^{2}x\sqrt{g_{\pm}}\, \psi^{\pm}_{k}\Delta_{\pm}\frac{\partial \psi^{\pm}_{k}}{\partial r_{0}}\\
& =2\int_{\disk}\d^{2}x\sqrt{g_{\pm}}\, \psi^{\pm}_{k}\frac{\partial}{\partial r_{0}}\Delta_{\pm}\psi^{\pm}_{k}
 =2 \int_{\disk}\d^{2}x\sqrt{g_{\pm}}\, \psi^{\pm}_{k}\frac{\partial}{\partial r_{0}}\bigl(\lambda^{\pm}_k\psi^{\pm}_{k}\bigr)\\
&= 2\frac{\d\la^{\pm}_{k}}{\d r_{0}} +2 \la^{\pm}_{k} \int_{\disk}\d^{2}x\sqrt{g_{\pm}}\, \psi^{\pm}_{k}\frac{\partial \psi^{\pm}_{k}}{\partial r_{0}} \, \cdotp
\end{split}\ee
Note that, to derive the third equality, we have used an integration by part and the Dirichlet boundary condition on $\psi^{\pm}_{k}$; and to derive the last equality we have used the normalisation condition
\be\label{normfk} \langle \psi^{\pm}_{k},\psi^{\pm}_{k}\rangle = \int_{\disk}\d^{2}x\sqrt{g_{\pm}}\, (\psi^{\pm}_{k})^{2} = 1\, .\ee
If we take the derivative of this condition with respect to $r_{0}$, we get a boundary term, that vanishes because $\psi^{\pm}_{k}$ is zero on the boundary, plus precisely the second term in the last line of \eqref{secndtermtrans}. Thus, this term vanishes and we obtain
\be \label{secndtermtransfinal} 
\int_{0}^{r_{0}}\d \rho\, \rho\int_{0}^{2\pi}\d\theta \,\frac{\partial}{\partial r_{0}} \Bigl[\bigl(\partial_{\rho}\psi^{\pm}_{k}\bigr)^{2} + \frac{1}{\rho^{2}}\bigl(\partial_{\theta}\psi^{\pm}_{k}\bigr)^{2}\Bigr] = 2\frac{\d\la^{\pm}_{k}}{\d r_{0}}\, \cdotp\ee
Using this result in Eq.\ \eqref{derlak1}, we get
\be\label{derlak2}
\frac{\d\la^{\pm}_{k}}{\d r_{0}}  = - r_{0}\int_{0}^{2\pi}\!\d\theta\,\bigl(\partial_{\rho}\psi^{\pm}_{k}\bigr)^{2}_{|\rho=r_{0}} <0\, .\ee
We have used the fact that $\partial_{\theta}\psi_{k}^{\pm}=0$ and $(\partial_{\rho}\psi_{k}^{\pm})^{2}>0$ along the boundary, which are direct consequences of the explicit form of the eigenfunctions, see Eqs.\ \eqref{eigenminus} and \eqref{eigenplus}.

\end{document}